\definecolor{Gr}{HTML}{377D71}
\definecolor{Pu}{HTML}{A459D1}
\definecolor{Bl}{HTML}{4D455D}
\definecolor{Te}{HTML}{C1ECE4}
\definecolor{Or}{HTML}{EF6262}
\definecolor{Am}{HTML}{F3AA60}
\definecolor{Co}{HTML}{3C486B}
\definecolor{Wh}{HTML}{FEFBF6}
\definecolor{Ye}{HTML}{FFE196}
\definecolor{Re}{HTML}{E96479}
\definecolor{Pi}{HTML}{FFD0D0}
\definecolor{Rp}{HTML}{FF9EAA}
\definecolor{Wg}{HTML}{EEF3D2}
\newcommand{\Rv}[1]{#1} 
\renewcommand*\thesubsection{\arabic{section}.\arabic{subsection}}
\newrobustcmd{\fixappendix}{%
  \patchcmd{\l@section}{1.5em}{7em}{}{}%
  \patchcmd{\l@subsection}{2.3em}{7em}{}{}%
}
\newtheorem{thm}{Theorem}
\newtheorem{lem}{Lemma}
\newtheorem{prop}{Proposition}
\newcommand{\mcC}{\mathcal C}
\newcommand{\mcP}{\mathcal P}
\newcommand{\mcN}{\mathcal N}
\newcommand{\reals}{\mathbb R}
\newcommand{\ints}{\mathbb N}
\newcommand{\naturals}{\mathbb Z}
\newcommand{\bbP}{\mathbb P}
\newcommand{\E}{\mathbb E}
\newcommand{\noiseparam}{\zeta} 
\newcommand{\growthfactor}{\Phi} 
\begin{document}

\title{\sffamily Coalescent processes emerging from large deviations}
\author{Ethan Levien}

\address{Dartmouth College, Department of Mathematics}

\ead{ethan.a.levien@dartmouth.edu}


\begin{abstract}
The classical model for the genealogies  of a neutrally evolving population in a fixed environment is due to Kingman. Kingman's coalescent process, which produces a binary tree, universally emerges  from many microscopic models in which the variance in the number of offspring is finite. It is understood that power-law offspring distributions with infinite variance can result in a very different type of coalescent structure with merging of more than two lineages. Here we investigate the regime where the variance of the offspring distribution is finite but comparable to the population size. This is achieved by studying a model in which the log offspring sizes have a stretched exponential form.  Such offspring distributions are motivated by biology, where they emerge from a toy model of growth in a heterogenous environment, but also mathematics and statistical physics, where limit theorems and phase transitions for sums over random exponentials have received considerable attention due to their appearance in the partition function of Derrida's Random Energy Model (REM). We find that the limit coalescent is a $\beta$-coalescent -- a previously studied model emerging from evolutionary dynamics models with heavy-tailed offspring distributions. We also discuss the connection to previous results on the REM. \end{abstract}

\section{Introduction}

Evolution is, in large part, shaped by a tension between two opposing ``forces'': neutral genetic drift and selection. Neutral genetic drift refers to random changes in the genetic composition of a population due to chance events in the deaths and reproduction of individuals. Selection, in contrast, results from a deterministic bias towards fitter individuals. A major objective of evolutionary biology is to determine the relative impact of these forces. In order to achieve this, we need to understand how different microscopic mechanisms manifest in macroscopic observables, such that the frequency of a genotype or the shape of the genealogical tree. 

Much of our understanding of the interplay between genetic drift and selection comes from the Wright-Fisher model \cite{neher2018progress,durrett2008probability}, or its counterpart with overlapping generations, the Moran model. In both models, the source of noise is the random sampling of individuals from generation to generation. In the large population size limit this sampling noise has a variance which is inversely proportional to $N$, the population size. When genetic drift is the sole source of changes in the composition of the population -- that is, in neutrally evolving populations --  $N$ sets the characteristic time-scale of the evolutionary dynamics.

The true population size is rarely related in a simple way to the variance in genotype frequency, as predicted by the Wright-Fisher model. Therefore, one usually thinks of an $N$ as an \emph{effective population size} measuring the overall strength of genetic drift, rather than the literal number of cells. This has motivated the question: What determines the effective population size? Of particular interest is the question of how non-genetic variation between individuals shapes the effective population size \cite{jafarpour2022evolutionary,levien2021non}. J.H. Gillespie was one of the first to address this question by considering a model in which the number of offspring from each individual is a random variable having finite variance. He showed that in this context the effective population size is obtained by scaling the true population size by the variance in offspring \cite{gillespie1973natural,starrfelt2012bet,schreiber2015unifying}.

We now have a more general and mathematically rigorous understanding of this problem which is based on the Cannings model \cite{cannings1974latent}.  Beginning with $N$ labeled cells, trajectories of the Cannings model are constructed by generating an exchangeable random vector $(\nu_{1,k},\dots,\nu_{N,k})$ satisfying $\sum_{i=1}^N \nu_{i,k}=N$ for each $k \in \ints$.  $\nu_{i,k}$ represents the number of offspring in generation $k$ which descend from individual $i$ in the $(k-1)$th time-step.  We will henceforth omit the subscript $k$ and it should be understood that all quantities associated with a generation are implicitly dependent on $k$.\footnote{The term \emph{generation} is used to refer to a time-step in the Cannings model.}
 Inspired by  \cite{schweinsberg2003coalescent,hallatschek2018selection,okada2021dynamic}, we will focus on the particular case where $\nu_i$ is obtained by first generating iid random variables $\{U_i\}_{i=1}^N$ representing the number of offspring of each individual \emph{before} resampling, and then sampling the resulting offspring pool, without replacement, to obtain the individuals in the next  generation.  Conditional on the offspring numbers $U_1,\dots,U_N$, $\nu_i$ follows a hypergeometric distribution:
 \begin{equation}
\bbP\left(\nu_i = n|\{U_i\}_{i=1}^N\right) = \frac{1}{{S_N \choose N}}{U_i \choose n}{S_N - U_i \choose N - n}.
\end{equation}
 This formulation bears a close resemblance to the model studied by Gillespie \cite{gillespie1973natural} and is an example of a Generalized Wright-Fisher model defined in \cite{der2011generalized}.

One limit theorem for the Cannings model concerns the behavior of genotype frequencies over long time-scales in large populations.  By long time-scales, we mean on the order of the \emph{coalescent time}, defined as the average number of generations we must travel backwards in time to find a common ancestor for two randomly selected individuals in the same generation.  The coalescent time is equivalent to the effective population size in some cases, but is more general in the sense that the definition does not require a mapping to the Wright-Fisher model.   Under the assumption that the variance in the number of offspring produced by each individual is finite, the time-rescaled genotype frequencies converge as $N \to \infty$ (in the  Skorokhod sense  -- see \cite{kern2022skorokhod,ethier2009markov}) to the well-known Wright-Fisher diffusion (WFD).  \cite{mohle2001classification,gillespie1974natural}. In the WFD, the change in the frequency, $X(t)$ of a genotype over a time interval $dt\ll 1$ is Normally distributed with mean zero and variance $X(t)(1-X(t))dt$. 

There is similar limit Theorem, due to Kingman \cite{kingman1982genealogy}, for the genealogical trees in the Cannings model. These trees can be generated by the stochastic process known as a \emph{coalescent processes} which, roughly speaking, is specifies which lineages have merged $k$ generations back in time from our original sample.
 Under the condition that the genotype frequencies converge to the WFD, the time-rescaled coalescent process converges to a continuous time process known now as the Kingman Coalescent.  The genealogical tree produced by this process is almost surely a binary tree where pairs of lineages merge at a rate one, and importantly, the probably of more than two lineages merging in a single instant is zero. Similar results exist for other microscopic models (e.g. the Moran process) and a large body of work focuses on understanding how the microscopic details of the process shape the coalescent time, or effective population size \cite{gillespie1974natural,jafarpour2022evolutionary,charlesworth2009effective,gillespie2004population,gillespie2001population}.



The WFD/Kingman models are not always sufficient to capture the dynamics of real evolution. This is due to, for example, \emph{multiple merger} coalescents appearing in experimental data \cite{tellier2014coalescence,sackman2019inferring}. In multiple merger coalescents there is a non-negligible probability to observe more than two individuals sharing a common ancestor in a single unit of time (e.g. a generation in the model) on time-scales of order of the coalescent time.  In \cite{schweinsberg2003coalescent}, Schweinsberg explored the question of whether such genealogies can emerge from neutral evolution by studying the Cannings model with power-law distributions.   A number of papers have since investigated the role of power-law tail offspring distribution in generating multiple merger coalescents and non-diffusive genotype frequency fluctuations \cite{hallatschek2018selection,okada2021dynamic,cordero2022lambda}. The genotype frequency dynamics which emerge from power law offspring -- known as $\Lambda$-Fleming Viot processes -- are non-diffusive processes which, in general,  have discontinuous sample paths.  With the notable exception of \cite{cordero2022lambda}, most previous work has focused on how multiple merger coalescents emerge when the variance in offspring is infinite. The assumption of infinite variance is a mathematical convenience, which may be justified in some cases e.g. for certain models of dormancy \cite{cordero2022lambda,wright2019stochastic} and rare mutations \cite{luria1943mutations}. What remains unclear is which coalescent processes emerge when the offspring distribution has a variance which is finite, yet large enough relative to the population size to give rise to multiple merger coalescents.

In this paper,  in order to understand the role of large but finite offspring variability, we investigate the limit processes which emerge when the population size and offspring variability are simultaneously taken to be large. Similar limits were investigated in  \cite{cordero2022lambda}, but we focus on a more specific scaling between population size and offspring variability, which allows us to obtain precise descriptions of the limit coalescents.  The offspring distribution we consider and our scaling assumption are both inspired by prior work on the Random Energy Model (REM) of disordered systems.

Our main result (Theorem \ref{thm:main}) says that the limit processes emerging form the genealogies of the Cannings model under this scaling limit, called $\beta$-coalescents, are the same as the coalescent processes emerging from power law offspring. Our model is parameterized by a scaling rate which is analogous to the temperature of the REM. Just as in the REM, we find there are two critical points. Below the lower critical point there is no continuous time limit process, while between the two critical points one finds multiple merger coalescents. However, while the lower critical point corresponds exactly to the lower critical point of the REM marking the transition to the ``frozen'' state, the upper critical point does have the obvious interpretation in the context of the REM (which would be to separate the regimes of strong and weak-self averaging of the partition function). Our results complement and expand upon the existing connection between coalescent theory and the REM, which was made by Bolthausen and Sznitman in \cite{bolthausen1998ruelle}.


\subsection{Organization of this paper}
This paper is organized as follows. In Section \ref{sec:model} we describe the model under consideration, which is a particular instance of the Cannings model.    In Sections \ref{sec:background_col} and  \ref{sec:powerlaw} we review some background of coalescent theory and review what is known about this model. In Sections \ref{sec:largebutfinite} we present our main result which concerns the limiting coalescent when both the population size and offspring variation are taken to be large. We also discuss the relationship between our results and \cite{cordero2022lambda}. Section \ref{sec:rem} is devoted to the Random Energy Model and the connection between our results and the thermodynamic limit of the REM.  

\section{Background}

\Rv{Throughout this paper, we use the following standard notation.  $g \sim f$ means $f/g \to 1$,  $a\wedge b= \min\{a,b\}$,  $a \vee b = \max\{a,b\}$, $(n)_k = n(n-1)\cdots (n-k+1)$ and $[n] = \{1,2,\dots,n\}$. 
}

\subsection{Exponential offspring model}\label{sec:model}

 \Rv{In order to study the situation where variation in offspring numbers is finite, but large relative to the population size, we set
\begin{equation}\label{eq:Uiexp}
U_i = e^{\noiseparam \growthfactor_i},\quad i \in [N]
\end{equation}
where $\{\growthfactor_i\}_{i=1}^N$ are iid and $\noiseparam$ is a deterministic parameter.  Since we are interested in the large noise limit, we will eventually take $\noiseparam \to \infty$. In the remainder of this paper, in order to avoid ambiguity we will often replace $i$ with $1$ when referencing elements of an exchangeable random vector. 
 Note that, technically, $U_1$ should take values in $\naturals$, however in the large noise limit the distinction is not relevant. Moreover, taking $\noiseparam \to \infty$ implies $\E[U_1]\to \infty$, so we can assume that
\begin{equation}\label{eq:uass}
\E[U_1]>1. 
\end{equation}
Offspring distribution of the form \ref{eq:Uiexp} were also considered in \cite{Siri2023beyond} within the context of a model including heritability and \cite{cordero2022lambda} in a model of dormancy. }

\Rv{It is biologically sensible to work with variation on an exponential scale whenever the organisms in question proliferate exponentially between bottlenecks.  In this context, $\noiseparam \growthfactor_i$ is the product of the duration of growth between bottlenecks and the time-averaged growth rate for the offspring of the $i$th cell.  For example, imagine an asexual population that is subject to successive cycles of growth and dilution in a heterogeneous environment. An example is a microbial pathogen, such as \emph{Mycobacterium tuberculosis} \cite{gagneux2018ecology,freschi2021population}. Suppose that after passing through a bottleneck, each of the $N$ cells occupies a spatially distinct location. Then, due to heterogeneity in the environment, the offspring of each cell will proliferate at different rates, $\Phi_i$. If $\noiseparam$ is the duration of the growth phase, the total number of offspring from the $i$th cell (before resampling) will be of the form \ref{eq:Uiexp}.  Alternatively, one could imagine that it is the duration of the growth phase which is random -- for example, due to a period of dormancy before growth -- while the growth rates are fixed.  In this case we would use $\noiseparam$ to denote the growth rates and $\Phi_i$ the duration of the growth phase. Such a dormancy model was studied in \cite{cordero2022lambda}.  }

We focus on the case where $\growthfactor_1$ has a stretched exponential distribution, which essentially means the large deviations of $\growthfactor_1$ are scale-invariant. To be precise, let 
\begin{equation*}
W(\phi) \equiv -\ln P(\growthfactor_1 > \phi).
\end{equation*}
We assume $W(\phi)$  is smooth and bounded as $\phi \to 0$ and, inspired by \cite{eisele1983third,ben2005limit}, assume 
\begin{equation}\label{eq:hass}
W(\phi) \sim W^*(\phi) \equiv  \frac{1}{q}\phi^{q},\quad q\ge 1.
\end{equation}
as $\phi \to \infty$.  
The pre-factor of $1/q$ is arbitrary, since this could be absorbed into $\noiseparam$, but this particular form will lead to some more elegant analytical formula. Note that we can always set $\E[\growthfactor_1]=0$ without loss of generality, since the contribution from $e^{\noiseparam \E[\growthfactor_1]}$ cancels in the ratio $U_1/S_N$. Note that offspring distribution satisfying Equations \ref{eq:uass} and \ref{eq:hass} include the special cases where $U_1$ is exponential $(q=1)$ and lognormal $(q=2)$.

\subsection{Characterization of limit coalescent processes}\label{sec:background_col}

For a realization of the Cannings model with population size $N$, the corresponding discrete coalescent process on a sample of size $n$, denoted by $(\Psi_{n,k}^N)_{k\ge0}$, describes the genealogical tree obtained by following each sampled individual's ancestors back through time and grouping branches when individuals share a common ancestor. This process is shown in Figure \ref{fig:cannings}. We can intuitively understand $\Psi_{n,k}^N$ as the state of this tree $k$ generations back from the time our original sample was taken. To define $\Psi_{n,k}^N$ more mathematically, let $\mcP_n$ denote the space of partitions on $[n] = \{1,\dots,n\}$; that is, each element of $\mcP_n$ is a set of disjoint subsets of $[n]$ whose union is $[n]$. Then $\Psi_{n,k}^N$ is a Markov chain taking values in $\mcP_n$ and $\Psi_{n,0}^N$ is the partition into singletons; that is, $\Psi_{n,0}^N = \{\{1\},\{2\},\dots,\{n\}\}$. 
For $k>0$, indices $i,j \in [n]$ are in the same block of the partition if and only if the $i$th and $j$th individuals in the original sample share an ancestor $k$ generation in the past.

The continuous time coalescent processes which emerge as limits of $\Psi_{n,k}^N$ from any exchangeable Cannings model have been characterized in \cite{mohle2001classification}. The authors consider the large-$N$ limit of the time-scaled coalescent process 
\begin{equation*}
\Psi_{n}^N(t) = \Psi_{n,\lfloor t/c_N\rfloor}^N
\end{equation*}
where $c_N$ is the probability two random selected individuals in one generation share an ancestor in the previous generation.  Observe that $c_N^{-1}$ is simply the coalescent time, since the number of generations to the most recent common ancestor of two random selected individuals from the current generation follows a geometric distribution with parameter $c_N$. Conditional on $\{\nu_i\}$, the chance for two individuals to both be descendants of the first individual in the previous generation is $(\nu_1/N)(\nu_1-1)/(N-1)$. Multiplying by $N$ and averaging over all possible $\{\nu_i\}_{i=1}^N$ gives 
\begin{equation}\label{eq:cNdef}
c_N = \frac{\E\left[\nu_1(\nu_1-1)\right]}{N-1}. 
\end{equation}
We emphasize that, given the distribution of $U_i$, it is not straightforward to compute $c_N$ because it has a nonlinear dependence on the sum $S_N$.

The precise notion of convergence considered is in the Skorohod sense \footnote{Let ${\mathbb D}_{\mcP_n}[0,\infty)$ denote the space of c\'{a}dl\'{a}g functions (right continuous functions whose left limit exits)  from $[0,\infty)$ to $\mcP_n$. The results of \cite{mohle2001classification} concern instances where $(\Psi_{n}^N(t))_{t\ge0}$ has a weak limit  ${\mathbb D}_{\mcP_n}[0,\infty)$ under the Skorohod toplogy $J_1$.}-- see \cite{ethier2009markov} for details.  
 If $(\Psi_n(t))_{t\ge0}$ converges to a continuous time process $(\Psi_{n}^N(t))_{t\ge0}$ in the Skorohod sense, the possible transitions that can occur in the process $\Psi_n(t)$ involve $\sum_{i=1}^rk_i$ of the $n = \sum_{i=1}^rk_i+s$  blocks in $\mcP_n$ merging into $r$ lineages by collapsing into groups of sizes $k_1,\dots,k_r$ while $s$ lineages remain unchanged. Such events are called $(n,k_1,\dots,k_r;s)$-collisions. The rates of these collisions, which uniquely determine  the law of $\Psi_{n}(t)$, will be denoted  by $\lambda_{n,k_1,\dots,k_r;s}$ for $n >2$.

\begin{figure}[h]
\centering
\includegraphics[width=0.9\textwidth]{./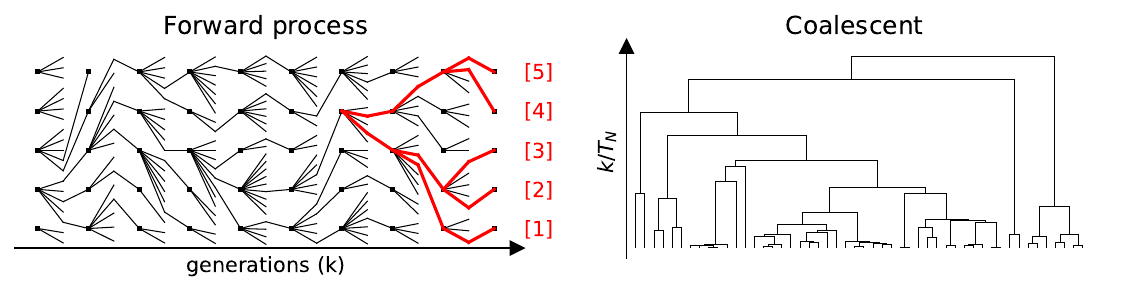}
\caption{(left) A simulation of the Cannings model. Squares indicate individuals at the beginning of each generation and the protruding lines are their offspring. The thick red lines indicate an example of a discrete genealogy, or coalescent $\Psi_{5,k}^5$, obtained from a sample of the final population of labeled cells. For example, $\Psi_{5,1}^5= \{\{1\},\{2,3\},\{4,5\}\}$. (right) A larger simulation of a continuous time coalescent process which is obtained in the limit of the model on the left. }\label{fig:cannings}
\end{figure}

In order to relate the merge rates $\lambda_{n,k_1,\dots,k_r;s}$ to the distribution of $\nu_i$, one notes that after conditioning on $\{\nu_i\}_{i=1}^{N_{\noiseparam}}$ the chance that $r$ groups of sizes $k_1,\dots,k_r$ each descended from individuals $1,\dots,r$ in the previous generation is 
\begin{equation*}\label{eq:nuprod}
\frac{1}{(N)_{\sum_{i=1}^rk_i}}\prod_{i=1}^r(\nu_i)_{k_i}.
\end{equation*}
It follows that
\begin{equation}\label{eq:lambdabk}
\lambda_{n,k_1,\dots,k_r;0} = \lim_{\noiseparam \to \infty}c_{N}^{-1}N^{r-n}\E\left[\prod_{i=1}^r(\nu_i)_{k_i}\right]. 
\end{equation}
In particular, if this limit exists and $c_{N}^{-1}\to \infty$, the time-rescaled coalescent $\Psi_{n}^{N}(t)$ converges to a process $\Psi_{n}(t)$ with merger rates given by Equation \ref{eq:lambdabk}. The rates for $s>0$ can be obtained via a certain recursive relation discussed in \cite{schweinsberg2001coalescents}.

Of particular relevance for our results are those coalescent processes for which there are multiple mergers, but not simultaneous multiple mergers -- in this case $\lambda_{n,k_1,\dots,k_r,s} =0$ unless $r=1$. Such coalescents are known as $\Lambda$-coalescents and are defined by the merger rates
\begin{equation}\label{eq:lambdank2}
\lambda_{n,k} = \lim_{N \to \infty}c_{N}^{-1}N^{1 - n}\E\left[(\nu_1)_k\right]. 
\end{equation}
Any rates defined in this way will satisfy the consistency condition
\begin{equation}\label{eq:lambda_const}
\lambda_{n,k} - \lambda_{n+1,k}= \lambda_{n+1,k+1},
\end{equation}
which is proved in \cite{pitman1999coalescents}.
The intuition behind Equation \ref{eq:lambda_const} is: The difference between the rate to see $k$ mergers in a sample of size $n$  and $n+1$ is entirely caused by mergers of $k+1$ lineages in the sample of $n+1$, since these look like $k$ mergers when restricted to the sample of size $n$. 
Pitman also proved that any triangular array $\{\lambda_{n,k}\}_{n=1,\dots,\infty,k=1,\dots,n}$ satisfying Equation \ref{eq:lambda_const} has a representation in terms of a positive measure $\Lambda:[0,1]\to \reals_{\ge0}$ via the relation
\begin{equation}\label{eq:pitman}
\lambda_{n,k} = \int_0^1 z^{k-2}(1-z)^{n-k}\Lambda(dz). 
\end{equation}
Thus, there is a correspondence between coalescent process and positive measures on the unit interval.

A simple criteria for the convergence to a $\Lambda$-coalescent is found by noting that if $\lambda_{2,2,2;s} =0$ then there are no simultaneous multiple mergers, since these would contribute to this rate. Therefore, if $c_N \to 0$ and 
\begin{equation}\label{condB}
\lim_{N \to \infty}c_{N}^{-1}\frac{\E[(\nu_1)_2(\nu_2)_2]}{N^2} = 0
\end{equation}
the limit process of the genealogy $(\Psi_{n,k}^{N})_{k\ge0}$ is a $\Lambda$-coalescent \cite{mohle2001classification}. The following proposition, which follows from Propositions 1 and 3 of \cite{schweinsberg2003coalescent}, summarizes these observations.

\begin{prop}\label{prop:lambda}
If $c_{N} \to 0$ and Equation \ref{condB} is satisfied, then as $N \to \infty$, $(\Psi_n^N(t))_{t\ge0}= (\Psi_{n,\lfloor t /c_{N} \rfloor}^{N})_{t\ge0}$ converges (in the Skorohod sense)  to a $\Lambda$-coalescent $(\Psi_{n}(t))_{t\ge0}$ with merger rates $\lambda_{n,k}$ given by Equation \ref{eq:lambdank2} for $k=n$ and Equation \ref{eq:lambda_const} otherwise. 
\end{prop}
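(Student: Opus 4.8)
The plan is to establish this proposition essentially as a bookkeeping consequence of the general Möhle–Sagitov classification machinery recalled in the excerpt, together with Schweinsberg's Propositions 1 and 3. The statement has two hypotheses — $c_N \to 0$ and Equation \ref{condB} — and I would verify that these are precisely the conditions needed to (i) guarantee a nontrivial continuous-time limit exists, (ii) rule out simultaneous multiple mergers, and (iii) identify the limiting object as a $\Lambda$-coalescent with the asserted rates.

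**First I would** address the existence of the time-rescaled limit. The condition $c_N \to 0$ ensures $c_N^{-1} \to \infty$, so that the coalescent time diverges and the discrete chain $(\Psi_{n,k}^N)$, observed on the $c_N^{-1}$ time-scale, has a genuine continuous-time scaling limit rather than collapsing to instantaneous coalescence. I would invoke the convergence criterion already stated below Equation \ref{eq:lambdabk}: if the limit defining $\lambda_{n,k_1,\dots,k_r;0}$ exists and $c_N^{-1}\to\infty$, then $\Psi_n^N(t)$ converges in the Skorohod sense to a process $\Psi_n(t)$ with those merger rates. The content I must supply is that under the stated hypotheses the relevant limits do exist and assemble into a consistent rate array.

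**Next**, the crucial step is showing the limit is a $\Lambda$-coalescent, i.e. that no simultaneous multiple mergers occur. As noted in the excerpt, simultaneous multiple mergers contribute to the rate $\lambda_{2,2,2;s}$, whose leading contribution is governed by the quantity in Equation \ref{condB}: by exchangeability, $c_N^{-1}N^{-2}\,\E[(\nu_1)_2(\nu_2)_2]$ is (up to the combinatorial normalization inherited from Equation \ref{eq:lambdabk}) exactly the rate of two disjoint pairs coalescing simultaneously among two distinct ancestors. Hypothesis \ref{condB} forces this to vanish in the limit, so $\lambda_{n,k_1,\dots,k_r;s}=0$ whenever $r>1$. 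By the Möhle–Sagitov dichotomy this leaves only single multiple-merger transitions, which is the defining property of a $\Lambda$-coalescent. I would then identify the rates: Equation \ref{eq:lambdank2} gives $\lambda_{n,k}$ directly in the maximal case $k=n$ (where $r=1$, $k_1=n$, $s=0$, so the product $\prod_i(\nu_i)_{k_i}$ reduces to $(\nu_1)_n$), and the consistency relation \ref{eq:lambda_const} — valid for any array arising this way by \cite{pitman1999coalescents} — recursively determines all $\lambda_{n,k}$ with $k<n$ from the boundary values $\lambda_{k,k}$.

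**The main obstacle** I anticipate is not any of the individual verifications, which are standard, but rather the careful matching of combinatorial prefactors: one must confirm that the quantity appearing in Equation \ref{condB} genuinely coincides (up to the correct power of $N$ and the $(N)_{\sum k_i}$ normalization in Equation \ref{eq:lambdabk}) with the limiting $\lambda_{2,2,2;0}$, so that its vanishing is both necessary and sufficient to exclude $r>1$ transitions. Since the proposition is stated as following from Propositions 1 and 3 of \cite{schweinsberg2003coalescent}, I would lean on those results for the delicate step — that the two conditions together imply Skorokhod convergence to the $\Lambda$-coalescent — and restrict my own argument to verifying that the hypotheses as phrased here translate into the hypotheses required there. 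The remaining work is then purely a translation exercise rather than a fresh probabilistic estimate.
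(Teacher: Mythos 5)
Your proposal matches the paper's treatment: the paper does not give an independent proof but derives Proposition \ref{prop:lambda} exactly as you do, by citing Propositions 1 and 3 of \cite{schweinsberg2003coalescent} together with the observation that condition \ref{condB} kills the simultaneous-merger rate $\lambda_{2,2,2;s}$ and that the consistency relation \ref{eq:lambda_const} determines the remaining rates from the $k=n$ case. Your added attention to the combinatorial normalization is a reasonable precaution but the paper treats it as already settled by the cited results, so no further work is needed.
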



\subsection{Known result for $q=1$}\label{sec:powerlaw}

We now return to the Cannings model with offspring distributions given by Equation \ref{eq:hass}. When $q=1$, $\growthfactor_i$ has exponential tails and the offspring sizes, $U_i$, have power law tails
\begin{equation*}\label{eq:powerlaw}
P(U_i>u) = P\left(\growthfactor_i >\frac{1}{\noiseparam}\ln u\right) \sim Cu^{-\alpha_1}
\end{equation*}
for $\alpha_1=1/\noiseparam$ and some constant $C>0$.  
The large $N$ limit of the Cannings model  with power law offspring (and $\noiseparam$ fixed) is covered by the main result of \cite{schweinsberg2003coalescent}, which we have stated below in an abbreviated form 

\begin{thm}\label{thm:power}
Assuming Equation \ref{eq:powerlaw} holds, then as $N \to \infty$ we have
 \begin{itemize}
   \item When $2 \le \alpha_1$, $\Psi_{n,\lfloor t /c_N \rfloor}^N$ converges to the $\Lambda$-coalescent with $\Lambda = \delta$, which is called the Kingman coalescent. 
     \item For $1\le\alpha_1<2$, $\Psi_{n,\lfloor t /c_N \rfloor}^N$ converges a $\Lambda$-coalescent where $\Lambda$ given by a $\beta$ distribution ${\rm Beta}(2-\alpha_1,\alpha_1)$. It follows from Equation \ref{eq:pitman} that the merger rates are
\begin{equation}\label{eq:mergerates-beta}
\lambda_{n,k} = \frac{B(k-\alpha_1,n-k+\alpha_1)}{B(2-\alpha_1,\alpha_1)}
\end{equation}
where $B(a,b) = \Gamma(a)\Gamma(b)/\Gamma(a+b)$ is the beta function. This process is a called a $\beta$-coalescent with parameter $\alpha_1$. 

 \item For $\alpha_1<1$, lineages will coalesce in $O(N^0)$ and hence there is no continuous time limit process for any rescaling of time. We refer to \cite{schweinsberg2003coalescent} for a detailed description of this process.
 \end{itemize}
 \end{thm}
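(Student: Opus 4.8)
The plan is to reduce the entire statement to the asymptotics of a single family of moments and then invoke Proposition \ref{prop:lambda}. The crucial structural fact is that, conditional on the offspring pool $\{U_i\}$, the vector $(\nu_1,\dots,\nu_N)$ is multivariate hypergeometric, so its joint descending-factorial moments are explicit:
\[
\E\!\left[(\nu_1)_{k_1}\cdots(\nu_r)_{k_r}\,\big|\,\{U_i\}\right] = (N)_{k_1+\cdots+k_r}\,\frac{\prod_{i=1}^r (U_i)_{k_i}}{(S_N)_{k_1+\cdots+k_r}}.
\]
In particular $\E[(\nu_1)_k] = (N)_k\,\E[(U_1)_k/(S_N)_k]$, and since the relevant mass sits in the regime where $U_1$ and $S_N$ are large, one has $(U_1)_k/(S_N)_k \approx (U_1/S_N)^k$. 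Thus everything hinges on the decay of the moments $m_k(N):=\E[(U_1/S_N)^k]$ of a single individual's offspring fraction.

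The next step is to compute $m_k(N)$ from the power-law tail \ref{eq:powerlaw}. Writing $S_N = U_1 + S_N'$ with $S_N' = \sum_{i\ge2}U_i$, for $\alpha_1>1$ the mean $\mu=\E[U_1]$ is finite so $S_N'\sim N\mu$; substituting $U_1 = N\mu x$ against the tail density $\alpha_1 C u^{-\alpha_1-1}$ yields
\[
m_k(N)\sim \alpha_1 C(N\mu)^{-\alpha_1}\int_0^\infty \frac{x^{\,k-1-\alpha_1}}{(1+x)^{k}}\,dx = \alpha_1 C(N\mu)^{-\alpha_1}\,B(k-\alpha_1,\alpha_1),
\]
the integral converging precisely because $k\ge2>\alpha_1$. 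Feeding $k=2$ into \ref{eq:cNdef} gives $c_{N}\sim \mathrm{const}\cdot N^{1-\alpha_1}$, which tends to $0$ with $c_{N}^{-1}\to\infty$ exactly when $\alpha_1>1$. Dividing the $k=n$ moment by the $k=2$ moment cancels every $N$-dependent prefactor in \ref{eq:lambdank2}, leaving $\lambda_{n,n}=B(n-\alpha_1,\alpha_1)/B(2-\alpha_1,\alpha_1)$; the remaining rates follow from the consistency relation \ref{eq:lambda_const}, and matching against Pitman's representation \ref{eq:pitman} identifies $\Lambda=\mathrm{Beta}(2-\alpha_1,\alpha_1)$, which is \ref{eq:mergerates-beta}. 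The case split then reads off the tail index directly: for $\alpha_1>2$ the variance is finite, $S_N\sim N\mu$ holds strongly, $m_k(N)=O(N^{-k})$ for $k\ge2$, so $\lambda_{n,n}=O(N^{2-n})\to0$ for $n\ge3$ and only binary mergers survive (Kingman, $\Lambda=\delta$); for $\alpha_1<1$ the mean diverges, $S_N$ is carried by its largest term, $U_1/S_N$ has a nondegenerate limit, lineages coalesce in $O(1)$ generations, and no rescaling of time produces a limit.

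To legitimately apply Proposition \ref{prop:lambda} I also verify the no-simultaneous-merger condition \ref{condB}. Using the bivariate factorial moment, $\E[(\nu_1)_2(\nu_2)_2]=(N)_4\,\E[(U_1)_2(U_2)_2/(S_N)_4]\approx N^4\,\E[U_1^2U_2^2/S_N^4]$; the same rescaling, now with both $U_1$ and $U_2$ of order $N$, produces a convergent double integral and $\E[U_1^2U_2^2/S_N^4]\sim \mathrm{const}\cdot N^{-2\alpha_1}$. Hence $c_{N}^{-1}\E[(\nu_1)_2(\nu_2)_2]/N^2\sim\mathrm{const}\cdot N^{1-\alpha_1}\to0$ for $\alpha_1>1$, so simultaneous multiple mergers are asymptotically negligible and the limit is genuinely a $\Lambda$-coalescent.

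The main obstacle is making these moment asymptotics rigorous rather than heuristic: the replacements $S_N'\sim N\mu$ and $(U_1)_k/(S_N)_k\approx (U_1/S_N)^k$ must be controlled uniformly over the whole range of $U_1$, including the extreme values where $U_1$ is comparable to $S_N$ and the two are strongly dependent. This is exactly the regime governed by the theory of regularly varying sums and the stable/Poisson--Dirichlet limits of the normalized atoms $\{U_i/S_N\}$, and the careful bookkeeping of the resulting error terms, together with the delicate boundary cases $\alpha_1\in\{1,2\}$ (where the Beta prefactors degenerate and the moment integrals pick up logarithmic corrections), is the technical heart of the argument. This analysis is carried out in \cite{schweinsberg2003coalescent}.
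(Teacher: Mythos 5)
Your sketch is correct and follows essentially the route the paper itself indicates: Theorem \ref{thm:power} is quoted from \cite{schweinsberg2003coalescent} without an in-paper proof, but the paper explicitly notes (just after Equation \ref{eq:numoments_eq1}) that evaluating the moment integral $N^k\E[(U_1/S_N)^k]$ for a power-law density in terms of Beta functions ``is one way to prove Theorem \ref{thm:power}'', which is exactly your computation, and like the paper you defer the rigorous control of the error terms to Schweinsberg. The only quibble is your claim that $m_k(N)=O(N^{-k})$ for all $k\ge2$ when $\alpha_1>2$: for $2<\alpha_1<k$ the $k$th moment of $U_1$ is infinite and $m_k(N)=O(N^{-\alpha_1})$ instead, though since $\alpha_1\wedge k>2$ the conclusion $\lambda_{n,n}=N^2 m_n(N)\to0$ for $n\ge3$ still holds.
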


This Theorem is closely related to the Generalized CLT (GCLT) for the sum $S_N$ -- see \cite{hallatschek2018selection,okada2021dynamic}.  Recall that the GCLT tells us there are two critical points for the limit law of $S_N$, one where the CLT breaks down ($\alpha_1 = 2$) and another where the LLN breaks down($\alpha_1 = 1$) - see e.g.  \cite{amir2020thinking,amir2020elementary,nolan2020univariate}. In Theorem \ref{thm:power}, these critical points correspond to the appearance of multiple mergers and a disappearance of any continuous time limit process respectively.

\section{Limit coalescent for large but finite offspring variability }\label{sec:largebutfinite}

\subsection{Scaling assumption}
  Our main result concerns the case where $q>1$. We reiterate that in this case the variance is finite for all $\noiseparam$, and therefore in the large $N$ limit (with $\noiseparam$ fixed) the convergence is to the WFD for the Allele frequencies and Kingman Coalescent for the genealogies. For any fixed $N$, if $\noiseparam$ is large enough the WFD/Kingman models are of course going to provide a very poor approximation. In order to better understand exactly what happens when $\noiseparam$ is large, but finite, we set $N = N_{\noiseparam}$ where $N_{\noiseparam}$ grows with $\noiseparam$ in such a way that there is a well-defined limit process as $\noiseparam \to \infty$. The appropriate scaling is related to the thermodynamic limit of the REM \cite{eisele1983third,ben2005limit}; see Section \ref{sec:rem}. %

To state our scaling assumption, we define the \emph{cumulant generating function},
 \begin{equation*}\label{eq:M}
r_{\noiseparam} = \ln \E[e^{\noiseparam \growthfactor_i}]. 
\end{equation*}  
To simplify some formulas later on, we define
$A_{\noiseparam} \equiv \E[S_{\noiseparam}] = N_{\noiseparam}e^{r_{\noiseparam}}$ with 
\begin{equation*}
S_{\noiseparam} \equiv S_{N_{\noiseparam}} = \sum_{i=1}^{N_{\noiseparam}}e^{\noiseparam \growthfactor_i}. 
\end{equation*} The relationship between $r_{\noiseparam}$ and $W^*(z)$ is crucial for our analysis. Using the Laplace method \cite{de1981asymptotic}, which amounts to evaluating $\E[e^{\noiseparam \growthfactor_i}]$ where the integral attains its maximum,  it can be shown that $r_{\noiseparam}$ is asymptotically the convex conjugates of $W^*$. As a result, these functions are related according to the Legendre transform \cite{touchette2005legendre}: 
\begin{equation*}\label{eq:legendre}
r_{\noiseparam} \sim r_{\noiseparam}^* = \inf_{z \ge0}\{z \noiseparam - W^*(z)\}. 
\end{equation*}
It then follows from Equations \ref{eq:hass} that 
\begin{equation}\label{eq:rdef}
r_{\noiseparam}^* =  \frac{1}{q'}\noiseparam^{q'}
\end{equation}
where $q'$ is the so-called dual exponent $q' = q/(q-1)$.
The equivalence between Equations \ref{eq:rdef} and \ref{eq:hass} is a special case of Kasahara–de Bruijn's exponential Tauberian Theorem \cite{mikosch1999regular}.

It follows from Equation \ref{eq:rdef} that all the moments of $U_i$ grow as $\noiseparam^{q'}$, since 
\begin{equation*}
\E[U_i^m] = \E[e^{m\noiseparam \growthfactor}] = e^{r_{m\noiseparam}} \sim e^{m^{q'}r_{\noiseparam}^*} 
\end{equation*} 
Intuitively, if $\ln N_{\noiseparam}$ grows slower than $\noiseparam^{q'}$ then $N_{\noiseparam}$ cannot keep up with the variation as $\noiseparam$ increases and no continuous time limit process will exists.  This serves as motivation for the scaling assumption,
\begin{equation}\label{eq:scale_general}
N_{\noiseparam} = e^{\tau r_{\noiseparam}}.
\end{equation}
Here, $\tau$ is a control parameter closely related to temperature in the Random Energy Model. Limit theorems for the sum $S_{\noiseparam}$ under this scaling assumption are proved in \cite{ben2005limit}. The authors show that, much like in the GCLT, $S_{\noiseparam}$ has two critical points, one where the CLT breaks down and one where the LLN breaks down. Moreover, the limit law of $S_{\noiseparam}$ is scale-invariant (see Theorem \ref{thm:be} in Section \ref{sec:rem}), which strongly suggests a connection between the limit coalescent processes obtained from offspring distribution of the form \ref{eq:Uiexp} under the scaling assumption \ref{eq:scale_general}, and those described by Theorem \ref{thm:power}.

\subsection{Main result}

The following theorem generalizes Theorem \ref{thm:power} to the case $q>1$. Interestingly, we find that the $\beta$-coalescent emerges universally from exponentially large offspring variation.

 \begin{thm}\label{thm:main}
Assume Equations \ref{eq:hass} and   \ref{eq:scale_general}  holds with $q>1$. Let 
\begin{equation}\label{eq:alphaq}
\alpha_{q} = (W^*)'\left((1+\tau)r_1^*\right) = \left(\frac{1+\tau}{q'}\right)^{q/q'}.
\end{equation}
and $\Psi_{n,k}^{\noiseparam} =  \Psi_{n,k}^{N_{\noiseparam}}$. 
Then 
\begin{itemize}
\item For $1<\alpha_q<2$, as $\noiseparam \to \infty$ the discrete coalescent process $(\Psi_{n,\lfloor t /c_{\noiseparam}\rfloor}^{\noiseparam})_{t\ge0}$ converges to a $\beta$-coalescent with parameter $\alpha_q$. 
\item For $\alpha_q >2$, $(\Psi_{n,\lfloor t /c_{\noiseparam}\rfloor}^{\noiseparam})_{t\ge0}$ converges to the Kingman coalescent. 
\end{itemize}
\end{thm}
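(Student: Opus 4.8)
The plan is to verify the hypotheses of Proposition \ref{prop:lambda} and to evaluate the $k=n$ rates in Equation \ref{eq:lambdank2}, using the fact that, conditional on the offspring pool $\{U_i\}_{i=1}^{N_{\noiseparam}}$, the resampled counts $\nu_i$ are hypergeometric, so that
\[
\E[(\nu_1)_k] = (N_{\noiseparam})_k\,\E\!\left[\frac{(U_1)_k}{(S_{\noiseparam})_k}\right]\sim (N_{\noiseparam})_k\,\E\!\left[\left(\frac{U_1}{S_{\noiseparam}}\right)^{\!k}\right],
\]
the last step being legitimate because only configurations with $U_1$ and $S_{\noiseparam}$ large will matter. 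The entire problem thus reduces to the asymptotics of the single moment $\E[(U_1/S_{\noiseparam})^k]$ as $\noiseparam\to\infty$, and the claim I would establish is that, under the scaling \ref{eq:scale_general}, the offspring sizes seen on the scale $A_{\noiseparam}=\E[S_{\noiseparam}]=N_{\noiseparam}e^{r_{\noiseparam}}$ look like a power law with exponent $\alpha_q$ --- which is precisely the mechanism behind Schweinsberg's Theorem \ref{thm:power} and the origin of the $\beta$-coalescent here.

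To expose this effective power law I would apply the Laplace/large-deviation estimate to the tail of $U_1=e^{\noiseparam\growthfactor_1}$ at the scale $A_{\noiseparam}$. Writing $\phi_A=\ln A_{\noiseparam}/\noiseparam=(1+\tau)r_{\noiseparam}/\noiseparam$ and expanding $W$ to first order about $\phi_A$ via Equation \ref{eq:hass} gives
\[
N_{\noiseparam}\,P(U_1>A_{\noiseparam}x)\sim e^{\,\tau r_{\noiseparam}-W^*(\phi_A)}\,x^{-\alpha_q},\qquad \alpha_q=\frac{(W^*)'(\phi_A)}{\noiseparam}=(W^*)'\!\big((1+\tau)r_1^*\big),
\]
the quadratic correction being $O(\noiseparam^{-q'})$ and hence negligible for fixed $x$; this is exactly where formula \ref{eq:alphaq} comes from. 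When $\alpha_q<2$, the moment with $k>\alpha_q$ is dominated by the rare event that one individual is of order $A_{\noiseparam}$, on which the residual sum concentrates, $S_{\noiseparam}\sim A_{\noiseparam}+U_1$; writing $U_1=A_{\noiseparam}x$ so that $U_1/S_{\noiseparam}\sim x/(1+x)$, I obtain
\[
\E\!\left[\left(\frac{U_1}{S_{\noiseparam}}\right)^{\!k}\right]\sim e^{-W^*(\phi_A)}\,\alpha_q\int_0^\infty\!\left(\frac{x}{1+x}\right)^{\!k}\! x^{-\alpha_q-1}\,dx = e^{-W^*(\phi_A)}\,\alpha_q\,B(k-\alpha_q,\alpha_q),
\]
the exact analog of Schweinsberg's estimate, with $e^{-W^*(\phi_A)}$ and $\alpha_q$ in the roles of the power-law prefactor and index.

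Substituting into the definitions then amounts to bookkeeping. From Equation \ref{eq:cNdef}, $c_{\noiseparam}\sim N_{\noiseparam}e^{-W^*(\phi_A)}\alpha_q B(2-\alpha_q,\alpha_q)=e^{\tau r_{\noiseparam}-W^*(\phi_A)}\alpha_q B(2-\alpha_q,\alpha_q)$, which tends to $0$ exactly when $\alpha_q>1$, and the same estimate shows the simultaneous-merger ratio of Equation \ref{condB} is of order $N_{\noiseparam}e^{-W^*(\phi_A)}\to0$, so Proposition \ref{prop:lambda} applies. Taking $k=n$ in Equation \ref{eq:lambdank2}, every $\noiseparam$-dependent factor ($N_{\noiseparam}$, $e^{-W^*(\phi_A)}$, $\alpha_q$) cancels against $c_{\noiseparam}$, leaving $\lambda_{n,n}=B(n-\alpha_q,\alpha_q)/B(2-\alpha_q,\alpha_q)$, and the consistency relation \ref{eq:lambda_const} upgrades this to the full $\beta$-coalescent rates \ref{eq:mergerates-beta}. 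For $\alpha_q>2$ the second moment is no longer tail-dominated --- the $x$-integral above diverges at $0$ --- and is instead controlled by the saddle scale of $\E[U_1^2]$, which now lies below $A_{\noiseparam}$, giving $\E[(U_1/S_{\noiseparam})^2]\sim e^{r_{2\noiseparam}}/A_{\noiseparam}^2$; with this smaller $c_{\noiseparam}$ one checks that $\lambda_{n,k}=0$ for all $k\ge3$, so only pairwise mergers survive and the limit is the Kingman coalescent.

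The main obstacle is the rigorous justification of the tagged-particle asymptotic: I must show that, conditionally on $U_1=A_{\noiseparam}x$, the residual sum $\sum_{i\ge2}U_i$ concentrates around $A_{\noiseparam}$ closely enough (on the multiplicative scale) that $S_{\noiseparam}\sim A_{\noiseparam}(1+x)$ may be inserted inside the expectation, and that no intermediate scale of $U_1$ contributes to the moment. This is precisely the content of the sum-of-random-exponentials limit theorems of \cite{ben2005limit} (Theorem \ref{thm:be}), whose scale invariance I would invoke to control the Laplace expansion uniformly over the relevant range of $x$ and to separate the tail-dominated regime $\alpha_q<2$ from the saddle-dominated regime $\alpha_q>2$; the replacement of falling factorials by powers and of $\E[(\nu_1)_k]$ by $(N_{\noiseparam})_k\,\E[(U_1/S_{\noiseparam})^k]$ is then routine.
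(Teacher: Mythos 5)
Your proposal is correct and follows essentially the same route as the paper: replace $\E[(\nu_1)_k]$ by $N_{\noiseparam}^k\E[(U_1/S_{\noiseparam})^k]$, use concentration of $S_{\noiseparam}$ to substitute $S_{\noiseparam}\sim U_1+A_{\noiseparam}$, expand $W$ to first order at the scale $A_{\noiseparam}$ to expose an effective power law with index $\alpha_q$, evaluate the Beta integral, and then verify $c_{\noiseparam}\to 0$ and condition \ref{condB} before invoking Proposition \ref{prop:lambda} (this is exactly Lemmas \ref{lem:nu1ratio}, \ref{lem:replaceratio} and \ref{lem:numoments} in the paper). One small correction: the ingredient you need from \cite{ben2005limit} is the law of large numbers for $S_{\noiseparam}$ (Lemma \ref{lem:LLN}, valid precisely when $\alpha_q>1$), not the fluctuation Theorem \ref{thm:be}, whose regimes are governed by $\tilde{\alpha}_q$ and do not coincide with the $\alpha_q$ thresholds of the coalescent transition.
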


 In Figure \ref{fig:transition}, the region $1<\alpha_q<2$ is plotted for various values of $q$. By moving upwards through these regions that we obtain $\beta$-coalescents. As expected, for larger $q$, the region becomes more tilted towards the right, since a larger value of $\noiseparam$ is needed to obtain a continuous-time limit process for the same $N$. As $q\to 1$, the region becomes a vertical strip between $1/\noiseparam=1$ and $1/\noiseparam = 2$, hence Theorem \ref{thm:power} is retrieved in this limit.  The regime $\alpha_q \le 1$ requires a different treatment not covered by this result, but is closely related to calculations of the Gibbs measure for the REM at low temperature, as we discuss in Section \ref{sec:rem}.

\begin{figure}[h]
\centering
\includegraphics[width=0.9\textwidth]{./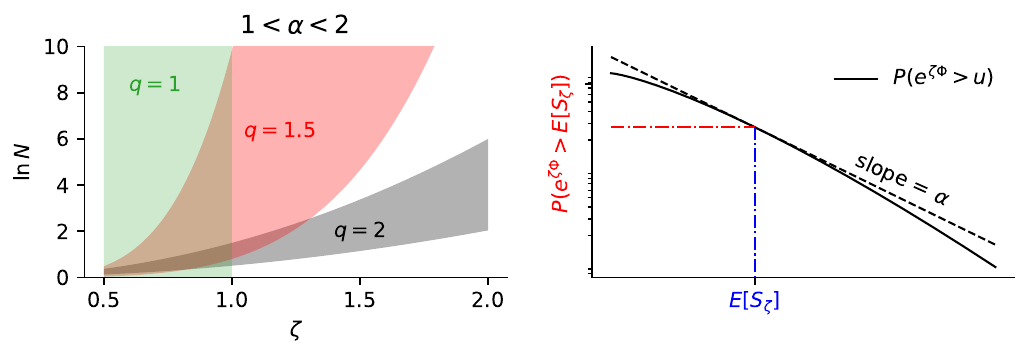}
\caption{(left) The region $1<\alpha<2$ in $\ln N-\noiseparam$ space for different values of $q$.    (right) A diagram of the idea behind the definition of $\alpha$.  }\label{fig:transition}
\end{figure}

We will Theorem \ref{thm:main} in Section \ref{sec:proofs}. Here, we provide an outline of the derivation which we will be useful when making the comparison to results for the REM. Following \cite{schweinsberg2003coalescent}, note that we can replace $\nu_1$ with $NU_1/S_N$ when computing averages (this is justified by Lemma \ref{lem:nu1tailbounds}). Therefore
\begin{equation}\label{eq:numoment_approx}
\E\left[(\nu_1)_k\right] \sim \E\left[\nu_1^k\right] \sim N_{\noiseparam}^k\E\left[\left(\frac{U_1}{S_{\noiseparam}}\right)^k\right].
\end{equation}
A LLN for $S_{\noiseparam}$ (Lemma \ref{lem:LLN}) allows us to replace $S_{\noiseparam}$ with $U_1 + N_{\noiseparam}e^{r_{\noiseparam}}$ (this is justified by Lemma \ref{lem:replaceratio}). Hence, letting $f_U(u)$ denote the density of $U_i$ and replacing sums with integrals (this is justified rigorously by \cite[Lemma 12]{schweinsberg2003coalescent}), we have
\begin{eqnarray}\label{eq:numoments_eq1}
N_{\noiseparam}^k\E\left[\left(\frac{U_1}{S_{\noiseparam}}\right)^k\right] &\sim N_{\noiseparam}^k\int_1^{\infty}\frac{u^k}{(u+Ne^{r_{\noiseparam}})^k}f_U(u)du\\
 &=e^{r_{\noiseparam}}N_{\noiseparam}^{k+1}\int_{(A_{\noiseparam})^{-1}}^{\infty}\frac{z^{k}}{(z +1)^k}f_U(zA_{\noiseparam})dz.
 \end{eqnarray}
To obtain the last equation we have changed variables $z = u/A_{\noiseparam}$. 
If $f_U$ is a power law, then integral can be evaluated and is given in terms of $\Gamma$-functions -- this is one way to prove Theorem \ref{thm:power}, although a different approach is taken in \cite{schweinsberg2003coalescent}.   The essence of our argument is that when we evaluate this integral we can replace $f_U(zA_{\noiseparam})$ with $e^{W^*(\noiseparam^{-1}(\ln A_{\noiseparam} z))}$ and then neglect the higher order terms in 
\begin{eqnarray*}
W^*\left(\frac{\ln A_{\noiseparam} z}{\noiseparam}\right) &= W^*\left(\frac{\ln A_{\noiseparam}}{\noiseparam}\right) + \frac{1}{\noiseparam}(W^*)'\left(\frac{\ln A_{\noiseparam} }{\noiseparam} \right)\ln z + \cdots\\
&= \frac{(1+\tau)^q}{q(q')^q}\noiseparam^{(q'-1)q} +  \left(\frac{1+\tau}{q'}\right)^{q/q'}\noiseparam^{(q'-1)(q-1)-1}  \ln z + \cdots.\\
&= \frac{(1+\tau)^q}{q(q')^q}\noiseparam^{q'} + \alpha_q \ln z  + \cdots.
\end{eqnarray*}
Here, we have used the relations 
\begin{equation*}
(q-1)(q'-1) = 1,  \quad (q'-1)q = q'
\end{equation*}
to simplify the exponents of $\noiseparam$. 
Since the coefficient of $\ln z$ is independent of $\noiseparam$, $f_U$ is approximately a decaying power law with exponent $\alpha_q$, defined by  Equation \ref{eq:alphaq}.  Making this replacement and evaluating the integral leads to the following Lemma, 
which says that $\E\left[(\nu_1)_k\right]$ is dominated by the event $U_1>N$ when $k>\alpha_q$.

\begin{lem}\label{lem:numoments} 
For $1<\alpha_q<k$, $k \in {\mathbb N}$ we have
 \begin{equation}\label{eq:nu1moments}
  \E[(\nu_{1})_k]  \sim  \alpha_q B(\alpha_q,k-\alpha_q)N_{\noiseparam}^k \bbP(e^{\noiseparam \growthfactor_1}> A_{\noiseparam}),\quad \noiseparam \to \infty.
   \end{equation} 
\end{lem}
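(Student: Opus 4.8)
The plan is to turn the heuristic computation displayed just before Lemma \ref{lem:numoments} into a rigorous limit by showing that, viewed at the critical scale $A_{\noiseparam}$, the offspring law is asymptotically a Pareto law of index $\alpha_q$, after which the factorial moment reduces to a Beta integral. Using the reductions that are already granted --- Equation \ref{eq:numoment_approx} (via Lemma \ref{lem:nu1tailbounds}), the law of large numbers (Lemma \ref{lem:LLN}) that replaces $S_{\noiseparam}$ by $U_1+A_{\noiseparam}$ (Lemma \ref{lem:replaceratio}), and the sum-to-integral passage of \cite[Lemma 12]{schweinsberg2003coalescent} --- it suffices to analyze
\[
\frac{\E[(\nu_1)_k]}{N_{\noiseparam}^k}\sim\E\!\left[\Big(\tfrac{U_1}{U_1+A_{\noiseparam}}\Big)^{k}\right]=\int_1^{\infty}\Big(\tfrac{u}{u+A_{\noiseparam}}\Big)^{k}f_U(u)\,du,
\]
which, after the substitution $u=zA_{\noiseparam}$ of Equation \ref{eq:numoments_eq1}, is a test of the bounded function $g_\infty(z)=(z/(1+z))^k$ against the rescaled offspring measure $f_U(zA_{\noiseparam})A_{\noiseparam}\,dz$.

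The analytic heart is the local regular-variation statement: for each fixed $z>0$,
\[
\frac{\bbP(U_1>zA_{\noiseparam})}{\bbP(U_1>A_{\noiseparam})}=\exp\!\Big(-\big[W(\tfrac{\ln(zA_{\noiseparam})}{\noiseparam})-W(\tfrac{\ln A_{\noiseparam}}{\noiseparam})\big]\Big)\longrightarrow z^{-\alpha_q}.
\]
This is precisely the first-order Taylor expansion shown before the lemma: replacing $W$ by $W^*$ is licensed by Equation \ref{eq:hass} because the base point $\ln A_{\noiseparam}/\noiseparam=(1+\tau)r_{\noiseparam}/\noiseparam\sim(1+\tau)r_1^*\,\noiseparam^{q'-1}$ tends to infinity, while the increment $\ln z/\noiseparam$ is asymptotically negligible; using $(W^*)'(x)=x^{q-1}$ and the identity $(q'-1)(q-1)=1$ one gets $\tfrac1{\noiseparam}(W^*)'(\ln A_{\noiseparam}/\noiseparam)\to\alpha_q$ as defined in Equation \ref{eq:alphaq}. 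Equivalently, $f_U(zA_{\noiseparam})A_{\noiseparam}/\bbP(U_1>A_{\noiseparam})\,dz$ converges to the Pareto density $\alpha_q z^{-\alpha_q-1}\,dz$, and the factor $\alpha_q$ here is exactly the one that appears in the statement. Testing against $g_\infty$ then yields
\[
\frac{\E[(\nu_1)_k]}{N_{\noiseparam}^k\,\bbP(U_1>A_{\noiseparam})}\longrightarrow\int_0^{\infty}\frac{z^{k}}{(1+z)^{k}}\,\alpha_q z^{-\alpha_q-1}\,dz=\alpha_q\,B(\alpha_q,k-\alpha_q),
\]
where the integral is recognized as a Beta integral, finite at the origin precisely because the hypothesis $k>\alpha_q$ makes $z^{k-\alpha_q-1}$ integrable there (this is where the condition $1<\alpha_q<k$ enters, ruling out domination by small offspring).

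The hard part will be upgrading the pointwise convergence in the last display to convergence of the integrals, i.e. justifying the interchange of $\noiseparam\to\infty$ with the integral. I would argue by dominated convergence, and the key tool is the convexity of $W^*$: the tangent-line inequality $W^*(\phi+h)\ge W^*(\phi)+(W^*)'(\phi)h$ bounds the tail ratio by $z^{-(\alpha_q-\varepsilon)}$ for $z\ge1$ and by $z^{-(\alpha_q+\varepsilon)}$ for $z\le1$ once $\noiseparam$ is large, producing a dominating integrand of the shape $z^{k-\alpha_q-1}/(1+z)^{k}$ that is integrable on $(\delta,\infty)$ for any fixed $\delta>0$; the large-$z$ tail is harmless since $\alpha_q>1$. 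The genuine obstacle is the region near the lower cutoff, $z\in(A_{\noiseparam}^{-1},\delta)$, where $\phi=\ln(zA_{\noiseparam})/\noiseparam$ is no longer large and the approximation $W\sim W^*$ fails. There I would instead use that $W$ is smooth and bounded as $\phi\to0$ (so $f_U$ is merely bounded on this range) together with the $z^{k}$ vanishing of $g_\infty$ to show this region contributes $O(\delta^{\,k-\alpha_q})$ after normalization, which is sent to zero by letting $\delta\to0$ after $\noiseparam\to\infty$. Making the Taylor error uniform over the bulk and reconciling this $W$-versus-$W^*$ discrepancy at the cutoff is the main technical step; everything else is the standard power-law moment computation already carried out in \cite{schweinsberg2003coalescent} for the $q=1$ case.
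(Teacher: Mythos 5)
Your proposal follows essentially the same route as the paper's proof: the same chain of reductions (Lemmas \ref{lem:nu1tailbounds}, \ref{lem:LLN}, \ref{lem:replaceratio} and the change of variables $z=u/A_{\noiseparam}$), the same extraction of $\alpha_q$ as $\noiseparam^{-1}(W^*)'(\ln A_{\noiseparam}/\noiseparam)$ from the first-order Taylor expansion, and the same use of convexity of $W^*$ (your tangent-line inequality is exactly the paper's statement $R_{\noiseparam}(z)\ge 0$) to get two-sided bounds reducing to the Beta integral; your packaging of this as vague convergence of the normalized rescaled offspring measure to a Pareto law plus dominated convergence is only a cosmetic reorganization of the paper's explicit $\varepsilon$-perturbation of the exponent and truncation at $L_{\noiseparam}=B\noiseparam^{q'/4}$. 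The one point to tighten is your treatment of the lower cutoff: for fixed $\delta$ the base point $\phi=\ln(zA_{\noiseparam})/\noiseparam$ is still large on most of $(A_{\noiseparam}^{-1},\delta)$, and the sub-region where $W\not\sim W^*$ (i.e.\ $u\le e^{C\noiseparam}$, hence $z\le e^{C\noiseparam}/A_{\noiseparam}\to 0$) must be bounded directly against $N_{\noiseparam}^k\,\bbP(U_1>A_{\noiseparam})$, which works because $k>\alpha_q/q$ — a step the paper's own proof also passes over lightly.
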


If $\alpha_q<2$, then from Lemma \ref{lem:numoments} we have  
\begin{equation*}\label{eq:cNtail_q1}
c_{\noiseparam} \sim \alpha_q B(\alpha_q,2-\alpha_q)N_{\noiseparam} \bbP(e^{\noiseparam \growthfactor_1}> A_{\noiseparam})
\end{equation*}
and then from Equation \ref{eq:lambdank2} and another application of Lemma \ref{lem:numoments},
\begin{equation*}\label{eq:lambdann_beta}
\lambda_{n,n} = \frac{B(n-\alpha_q,\alpha_q)}{B(2-\alpha_q,\alpha_q)},\quad n\ge2.
\end{equation*}
These are precisely the merger rates of the $\beta$-coalescent and due to the consistency condition \ref{eq:lambda_const}, uniquely determine the rates $\lambda_{n,k}$ for $k<n$.

On the other hand, when $\alpha_q > k$, $\E[\nu^k]$ is no longer dominated by the tail and we can replace $S_{\noiseparam}$ with $A_{\noiseparam}$ in Equation \ref{eq:numoment_approx}, leading to
\begin{equation*}
\E[\nu_1^k] \sim \frac{N_{\noiseparam}^k\E[e^{k\noiseparam \Phi_1}]}{A_{\noiseparam}^k}. 
\end{equation*}
Thus if $\alpha_q> 2$, 
  \begin{equation}\label{eq:ckingman}
c_{\noiseparam} \sim \frac{N_{\noiseparam}}{A_{\noiseparam}^2}\E[e^{2\noiseparam \growthfactor_1}] \sim e^{(2^{q'}-\tau-2)r^*_{\noiseparam}}. 
 \end{equation}
In this regime, $c_{\noiseparam}$ decays slower than $N^{1-n}\E[\nu_1^n]$ for all $n>2$ and hence all the $\lambda_{n,n}$ except $\lambda_{2,2}$ vanish. 
 

\subsection{Relationship to the result of \cite{cordero2022lambda}}\label{sec:dormancy}

\Rv{
We now remark on the relationship between Theorem \ref{thm:main} and the results of  \cite{cordero2022lambda}. In their model, it is assumed that at the beginning of each generation (referred to as the spring in \cite{cordero2022lambda}), individuals experience a period of dormancy during which no reproduction occurs. At random times, individuals awaken from dormancy and reproduce according to a Yule process -- that is, new individuals are spawned from existing ones at a (deterministic) rate until the end of the generation.  The authors also allow for a period (called the summer) during which all individuals are awake and reproducing, although we will neglect this for the present discussion.   }

\Rv{To make the connection to our model, let $\noiseparam$ rate at which cells divide after the period of dormancy has ended and let $\Phi_i$ denote the duration of the $i$th cell's growth phase (i.e. the difference between the total time between bottlenecks and the dormancy period). In the dormancy model, it follows from properties of Yule processes that $U_1|\Phi_1$ is a geometric random variable with parameter $e^{-\noiseparam \Phi_1}$, hence 
\begin{equation*}
P\left(U_1 >e^{\noiseparam \phi} |\Phi_1\right)= \left(1 - e^{-\noiseparam \Phi_1} \right)^{\lfloor e^{\noiseparam \phi}  \rfloor}
\end{equation*}
and $E[U_1|\Phi_1] = e^{\noiseparam \Phi_1}$. In contrast, in our model $U_1$ is deterministic after conditioning on $\Phi_1$. This distinction should have no effect on the limit coalescent under our scaling assumption, since $\left(1 - e^{-\noiseparam \Phi_1} \right)^{\lfloor e^{\noiseparam \phi}  \rfloor}$ can be replaced with $1_{\Phi_1>\phi}$ in the large $\noiseparam$ limit, 
and therefore 
\begin{eqnarray*}
P\left( \frac{1}{\noiseparam}\ln U_1 >\phi \right) &= \E\left[P\left(\frac{1}{\noiseparam}\ln U_1 >\phi |\Phi_1\right)\right] \\
& = \E\left[P\left(U_1 >e^{\noiseparam \phi} |\Phi_1\right)\right] \sim P\left(\Phi_1 >\phi \right).
\end{eqnarray*}
At least heuristically, this justifies the replacement of $U_1$ with $e^{\noiseparam \Phi_1}$ when calculating asymptotic behavior of the coalescent process. However, we have neglected variation $U_1|\Phi_1$ in order to simplify the derivations in the present paper. 
}

\Rv{The main results of \cite{cordero2022lambda} concern the limit genealogies and are closely related to ours. Theorem 1.3 concerns the situation where the period of growth after dormancy is exponentially distributed and is therefore very similar to Theorem \ref{thm:power} in the present paper. Their more general result, stated below, characterizes the possible forms of $\Lambda$ which can emerge as limits of the discrete coalescents in the dormancy model. }

\Rv{\begin{thm}[Proposition 1.8 and Theorem 1.7 from \cite{cordero2022lambda}]\label{thm:dorm} Suppose $U_i$ are of the form \ref{eq:Uiexp}. For any $\Lambda$-coalescent that emerges as the limit of $(\Psi_{n,\lfloor  t/c_{\noiseparam} \rfloor}^{\noiseparam})_{t\ge0}$ in the dormancy model described above, $\Lambda$ has the form
\begin{equation*}
\Lambda(dx) = b_0 \delta_0(dx) + b_1 \delta_1(dx) + h(x)dx
\end{equation*}
where $\delta_x$ is the point mass at $x$ and $h$ is a probability density on $(0,1)$ with the representation
\begin{equation}\label{eq:dormh}
h(y) = \frac{y^2}{(1-y)^2}g\left(\frac{y}{1-y} \right)
\end{equation}
for a monotone function $g$ satisfying 
\begin{equation*}
\int_0^{\infty} g(v)(1\wedge v^2)dv<\infty. 
\end{equation*}
\end{thm}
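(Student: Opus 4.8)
The plan is to identify the limit measure $\Lambda$ directly from the point process of rescaled family sizes and then read off its structure through the change of variables $v=z/(1-z)$. First I would recall from Equations \ref{eq:lambdank2}--\ref{eq:pitman} that any limit coalescent is a $\Lambda$-coalescent determined by the single measure $\Lambda$ on $[0,1]$, and that (taking $k=2$ in \ref{eq:pitman}) the driving Poisson intensity of an event in which a fraction $z$ of the population coalesces is $z^{-2}\Lambda(dz)$. Since the large mergers in the Cannings model are produced by a single individual with an atypically large family, the natural object is the point process $\{U_i/(S_{\noiseparam}-U_i)\}_{i=1}^{N_{\noiseparam}}$ of odds ratios; writing $v_i=U_i/(S_{\noiseparam}-U_i)$ and $v=z/(1-z)$, the claimed representation $h(y)=y^2(1-y)^{-2}g(y/(1-y))$ is exactly the statement that, in the $v$ coordinate, the intensity $z^{-2}\Lambda(dz)$ becomes $g(v)\,dv$. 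Thus the theorem reduces to showing that this rescaled intensity is an absolutely continuous, monotone measure satisfying the stated integrability bound, with possible atoms left over at $v=0$ (the Kingman part $b_0\delta_0$) and $v=\infty$ (the total-sweep part $b_1\delta_1$).

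Next I would compute the intensity explicitly. Because the $U_i$ are i.i.d.\ and only one family can be large at the coalescent time-scale, the intensity of $\{v_i\}$ is asymptotically $c_{\noiseparam}^{-1}N_{\noiseparam}\,\bbP(v_1\in dv)$. Conditioning on a large family and invoking the law of large numbers for the remaining sum (Lemma \ref{lem:LLN}, which lets one replace $S_{\noiseparam}-U_1$ by its mean $A_{\noiseparam}=N_{\noiseparam}e^{r_{\noiseparam}}$, as in Lemma \ref{lem:replaceratio}), I would replace $v_1$ by $U_1/A_{\noiseparam}=e^{\noiseparam\growthfactor_1}/A_{\noiseparam}$. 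Using $\bbP(\growthfactor_1>\phi)=e^{-W(\phi)}$ and the logarithmic change of variable $\phi=\noiseparam^{-1}\ln(A_{\noiseparam}v)$, this identifies the candidate density as
\begin{equation*}
g(v)=\lim_{\noiseparam\to\infty}c_{\noiseparam}^{-1}N_{\noiseparam}\,A_{\noiseparam}\,f_U(A_{\noiseparam}v),
\end{equation*}
where $f_U$ is the density of $U_1$. The integrability condition $\int_0^\infty g(v)(1\wedge v^2)\,dv<\infty$ is then equivalent to the finiteness of $\Lambda$ on $[0,1]$: under $dy=(1-y)^2\,dv$ one has $\int_0^1 h(y)\,dy=\int_0^\infty v^2(1+v)^{-2}g(v)\,dv$, whose convergence near $v=0$ and near $v=\infty$ is governed precisely by the factors $v^2$ and $1$.

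The crux is monotonicity of $g$. Here I would argue that, because the survival function $\phi\mapsto e^{-W(\phi)}$ is decreasing and $W$ is eventually convex, the rescaled densities $v\mapsto c_{\noiseparam}^{-1}N_{\noiseparam}A_{\noiseparam}f_U(A_{\noiseparam}v)$ form a family whose weak limit is forced to be monotone: writing $f_U(u)=\noiseparam^{-1}u^{-1}W'(\noiseparam^{-1}\ln u)\,e^{-W(\noiseparam^{-1}\ln u)}$ and passing to the $v$ variable, the $\noiseparam\to\infty$ limit collapses the $W'$ prefactor into the slope of $W$ at the operating point $\phi=\noiseparam^{-1}\ln A_{\noiseparam}$, leaving a factor that is monotone in $v$. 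Making this rigorous --- rather than merely asymptotic in the sense of Lemma \ref{lem:numoments}, where only the $k$-th moments were needed --- is the main obstacle, since it requires uniform (not just integrated) control of the rescaled density together with an interchange of the $\noiseparam\to\infty$ limit and the weak convergence of the point process. I expect the atoms $b_0\delta_0$ and $b_1\delta_1$ to be the comparatively easy residual cases: $b_1$ captures mass escaping to $v=\infty$ (a single family of size comparable to $S_{\noiseparam}$, i.e.\ a near-total merger), and $b_0$ absorbs the Kingman contribution surviving from the finite-variance bulk, both of which are pinned down once the continuous intensity $g(v)\,dv$ has been isolated.
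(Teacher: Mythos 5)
First, a point of order: the paper does not prove this statement. Theorem \ref{thm:dorm} is imported verbatim (as Proposition 1.8 and Theorem 1.7) from \cite{cordero2022lambda} and is used only to situate Theorem \ref{thm:main} within that classification, so there is no in-paper proof to compare yours against; any assessment has to be against the cited source.

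Your measure-theoretic bookkeeping is correct and does capture the meaning of the representation: with $v=y/(1-y)$ one has $dv=dy/(1-y)^2$, so $y^{-2}h(y)\,dy=g(v)\,dv$ is exactly equivalent to Equation \ref{eq:dormh}, and $\int_0^1h(y)\,dy=\int_0^\infty \frac{v^2}{(1+v)^2}g(v)\,dv$ shows the stated integrability is precisely finiteness of $\Lambda$ on $[0,1]$. Identifying $g$ with the limiting intensity of the odds ratios $U_i/(S_{\noiseparam}-U_i)$ is also the right picture.

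The genuine gap is the step you flagged yourself: monotonicity of $g$. You route it through the deterministic model $U_1=e^{\noiseparam\growthfactor_1}$ of the present paper, writing $f_U$ in terms of $W'$ and $e^{-W}$ and appealing to eventual convexity of $W$ together with the scaling assumption \ref{eq:scale_general}. But the theorem is about the dormancy model of \cite{cordero2022lambda}, in which $U_1\mid\growthfactor_1$ is geometric with parameter $e^{-\noiseparam\growthfactor_1}$ (see Section \ref{sec:dormancy}); there the monotonicity of $g$ is structural, not asymptotic: the mass function of a geometric random variable is non-increasing, a mixture of non-increasing densities is non-increasing, and this property passes to the rescaled limit with no reference to $W$ whatsoever. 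Your convexity argument does not obviously close even in the deterministic model, since each rescaled density is a product of a monotone factor with the non-monotone prefactor $W'(\noiseparam^{-1}\ln u)$, and a pointwise limit of such products need not be monotone without uniform control. Moreover, the theorem classifies \emph{every} $\Lambda$ that can arise, over all admissible couplings of $N$ and $\noiseparam$, whereas your candidate formula $g(v)=\lim_{\noiseparam\to\infty}c_{\noiseparam}^{-1}N_{\noiseparam}A_{\noiseparam}f_U(A_{\noiseparam}v)$ presupposes the law-of-large-numbers replacement $S_{\noiseparam}-U_1\approx A_{\noiseparam}$ of Lemma \ref{lem:replaceratio}, which is available only in the high-temperature regime $\tau>1/(q-1)$ of Lemma \ref{lem:LLN}. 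As written, your argument would at best exhibit one family of limits satisfying the representation, not establish that every limit must have this form; to prove the statement as given you would need to work from the geometric conditional law and derive monotonicity from the mixture structure, which is how \cite{cordero2022lambda} proceeds.
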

Our main result, Theorem \ref{thm:main}, says that under the scaling assumption \ref{eq:scale_general} the limit coalescent is of the form described by Theorem \ref{thm:dorm} with  $g(v) = v^{-1-\alpha_q}$ and $b_0 = b_1=0$.  }

\section{Interpretation in the REM}\label{sec:rem}

\subsection{Background on the REM}

\Rv{The REM was introduced by Derrida in \cite{derrida1981random} as a toy model of spin glasses. As with other models of magnetic systems, the state space is the $n$-hypercube,  $\mcC_n \equiv \{-1,1\}^{n}$ and each configuration $\sigma \in \mcC_n$ is assigned an energy, $E_{\sigma}$. When in equilibrium with a reservoir of temperature $T$, the steady-state distribution of configurations is given by the Boltzmann distribution $P_{\sigma} \propto e^{-\beta E_{\sigma}}$ where $\beta^{-1} = T$ is the inverse temperature (assuming $k_B =1$ for notational simplicity). The quantity of central interest in (equilibrium) statistical mechanics is the free energy, 
 \begin{equation}
F_n = \ln \sum_{\sigma \in \mcC_n}^{2^n}e^{-\beta E_{\sigma}}.
 \end{equation}
It is said that the thermodynamic limit exists if the limit of $F_n/n$ exists, and by differentiating the \emph{free energy density}, $\psi \equiv -\lim_{n \to \infty} F_n/(\beta n)$, with respect to temperature (or other model parameters) various \emph{thermodynamic} relations are obtained \cite{goldenfeld2018lectures,bovier2006statistical}. 
In spin glass models, $E_{\sigma}$ is itself taken to be a random variable, and the free energy density is computed from the mean free energy, $\E[F_n]$. The hope is always that the thermodynamics emerging from a random energy field are typical of systems with very unstructured energy landscapes.  For mathematical convenience $\{E_{\sigma}\}_{\sigma \in \mcC_n}$ is usually taken to be a Gaussian random field, and the simplest such model is of course found by taking  $E_{\sigma}$ to be iid. In order for the thermodynamic limit to exist in this case, the variance of $E_{\sigma}$ must grow proportional to $n$ -- this is precisely Derrida's REM, which can be understood as the limit of a very rugged energy landscape.  } 

Since there are no correlations between energies, we can abandon the hypercube structure and identify the state space with $[2^n]$, writing $E_i$ for the $i$th energy level. Following previous convections, we suppose that $E_i/n$ has variance $1/2$.  Then, when $\Phi_i$ are Gaussian with unit variance,  by setting \begin{eqnarray*}
n &=  \log_2 N_{\noiseparam}\\
\beta &= \sqrt{\frac{2 \ln 2}{\tau}}\\ 
E_i &= -\sqrt{\frac{n}{2}}\growthfactor_i 
\end{eqnarray*}
we see that the partition function in the REM with these parameters is equal to the total number of offspring in the Cannings model
\begin{equation*}
S_{\noiseparam} = \sum_{i=1}^{N_{\noiseparam}}e^{\noiseparam \Phi_i} = \sum_{i=1}^{2^n}e^{-\beta E_i}.
\end{equation*} 
 Note that the negative sign in front of $E_i$ is inconsequential because we have assumed $\E[\growthfactor_i]=0$ and can simply change the sign in Equation \ref{eq:hass}.

Interestingly, even in this simple model one finds a phase transition,  which occurs at a critical value $\beta_c = \sqrt{2\ln2}$, or in our notation, $\tau=1$ (with $q=2$). The nature of this transition can be understood by examining the average number of configurations for which $E_i \in [n\varepsilon,n(\varepsilon + d\varepsilon)]$, which we denote by $\mcN(d\varepsilon)$ \cite{mezard2009information}. Loosely speaking, $\mcN(d\varepsilon)$ is on the order of \cite{kistler2015derrida}
\begin{equation*}
2^ne^{-n\varepsilon^2} = {\rm exp}\left\{n\left[\frac{\beta_c^2}{2}   -\varepsilon^2\right]\right\}. 
\end{equation*}
This approximation makes sense only when $\varepsilon < \beta_c/\sqrt{2}$, since for large $n$ there are virtually no configurations with $\varepsilon >\beta_c/\sqrt{2}$. With these heuristics, one can identify two regimes for the \emph{entropy density}, $s(\varepsilon) \equiv \lim_{n \to \infty} \frac{1}{n}\ln \mcN(d\varepsilon)$: $s(\varepsilon) = \beta_c^2/2   - \varepsilon ^2$ when $|\varepsilon| \le \sqrt{2}\beta_c$ and $s(\varepsilon) = -\infty$ when $|\varepsilon|>\sqrt{2}\beta_c$. 
Meanwhile, the free energy density can be obtained as 
\begin{equation*}
\psi(\beta) = \lim_{n \to \infty} \frac{1}{\beta n}\int_{-\infty}^{\infty} e^{-\beta n \varepsilon}\mcN(d\varepsilon)  = \lim_{n \to \infty} \frac{1}{\beta n}\ln \int_{-\infty}^{\infty} e^{n(s(\varepsilon) - \beta \varepsilon)}d\varepsilon. 
\end{equation*}
and hence $\psi(\beta)$ and $s(\varepsilon)$ are convex conjugates of each other.\footnote{In the terminology of large deviation theory, $\phi(\beta)$ is the scaled cumulant generating function associated with the normalized energy $\varepsilon$, whose large deviation rate function is $s(\varepsilon)$ \cite{touchette2009large}.}  A short calculation using the Laplace method yields Derrida's result:
\begin{equation}
\psi(\beta) = \left\{ \begin{array}{cc}
- \frac{\beta}{4} -\frac{\beta_c^2}{2\beta}&  \beta \le \beta_c\\
-\ln 2& \beta > \beta_c
\end{array}
\right..
\end{equation}

Intuitively, when $\beta\le\beta_c$ the variation in energies is small enough that the LLN can be applied to the partition function. Indeed, one way to arrive at $\psi(\beta)$ in the high temperature phase is to replace $\E[\ln S]$ with $\ln \E[S]$, since 
\begin{equation*}
\E\left[\sum_{i=1}^{2^n}e^{-\beta E_i}\right]= 2^ne^{\beta^2n/4} = \exp\left\{\frac{\beta_c^2}{2} + \frac{\beta^2}{4}\right\}.
\end{equation*} 
In this regime, we say that the system is self-averaging. 
On the other hand, when $\beta>\beta_c$, the system enters a so-called ``frozen'' phase where the partition function is dominated by the extremal statistical weights. 

%

%

The arguments above can be generalized to the sum $S_{\noiseparam}$ for any $q>1$ -- for example, see \cite{eisele1983third}. Of central importance for the application to coalescent theory is following LLN numbers, which tells us the partition function of the REM (with non-gaussian energy distribution) is self-averaging in the high temperature regime. 
\begin{lem}[Theorem 2.1 from \cite{ben2005limit}]\label{lem:LLN} Let $q>1$ and assume
\begin{equation*}\label{eq:taucond}
\tau>\frac{1}{q-1}.
\end{equation*}
Then for all $\varepsilon>0$, 
\begin{equation*}
\lim_{\noiseparam \to \infty}\bbP\left(\left|\frac{S_{\noiseparam}}{A_{\noiseparam}} - 1 \right|>\varepsilon\right) = 0. 
\end{equation*}
\end{lem}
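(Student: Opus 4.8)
The plan is to prove the LLN by a truncation argument combined with Chebyshev's inequality, since the summands $U_i = e^{\noiseparam\growthfactor_i}$ have stretched-exponential tails and a direct second-moment estimate is too crude to reach the stated threshold. Indeed, $\E[U_1^2] = e^{r_{2\noiseparam}} \sim e^{2^{q'}r_{\noiseparam}^*}$, so that $\mathrm{Var}(S_{\noiseparam})/A_{\noiseparam}^2 \sim N_{\noiseparam}\E[U_1^2]/A_{\noiseparam}^2 \sim e^{(2^{q'}-\tau-2)r^*_{\noiseparam}}$ (the exponent appearing in Equation \ref{eq:ckingman}), which vanishes only under the far stronger hypothesis $\tau > 2^{q'}-2$. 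The second moment is dominated by rare, atypically large offspring numbers; excising these by truncation is precisely what lets us descend to the sharp threshold $\tau > 1/(q-1) = q'-1$.

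Concretely, I would fix a truncation level $b_{\noiseparam}$, characterized to leading order by $U_1 \le b_{\noiseparam} \iff \growthfactor_1 \le c\,\noiseparam^{q'-1}$ for a constant $c$ to be chosen, set $\tilde U_i = U_i \mathbf{1}\{U_i \le b_{\noiseparam}\}$ and $\tilde S_{\noiseparam} = \sum_i \tilde U_i$. The proof then reduces to three estimates: (i) with high probability no summand is truncated, $\bbP(S_{\noiseparam}\ne\tilde S_{\noiseparam}) \le N_{\noiseparam}\bbP(U_1>b_{\noiseparam}) \to 0$; (ii) the truncated mean matches the true one, $N_{\noiseparam}\E[U_1 \mathbf{1}\{U_1>b_{\noiseparam}\}]/A_{\noiseparam}\to 0$; and (iii) the truncated sum concentrates, $\mathrm{Var}(\tilde S_{\noiseparam})/A_{\noiseparam}^2 \le N_{\noiseparam}\E[\tilde U_1^2]/A_{\noiseparam}^2 \to 0$. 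Combining (i)--(iii) with Chebyshev's inequality gives $S_{\noiseparam}/A_{\noiseparam}\to 1$ in probability.

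Each quantity in (i)--(iii) is an integral of the form $\int e^{m\noiseparam\phi - W(\phi)}(\cdots)\,d\phi$ whose asymptotics I would extract by the Laplace method using $W(\phi)\sim \tfrac1q\phi^q$ and the identity $(q'-1)q=q'$. Writing the cutoff as $\phi_b = c\,\noiseparam^{q'-1}$, each condition collapses to an inequality between $c$, $\tau$, and $q$: condition (i) requires $c^q > (q-1)\tau$; condition (ii) requires $c>1$ (so that the first-moment tail integral lies beyond its saddle $\phi^{**}=\noiseparam^{q'-1}$); and condition (iii), with the cutoff placed below the saddle $\phi_* = (2\noiseparam)^{1/(q-1)} = 2^{q'-1}\noiseparam^{q'-1}$ of the integrand $e^{2\noiseparam\phi - W(\phi)}$ so that the integral is endpoint-dominated, requires $2qc - c^q < (\tau+2)(q-1)$. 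The crux is that these are simultaneously satisfiable exactly when $\tau > 1/(q-1)$: taking $c$ just above $((q-1)\tau)^{1/q}$ (which exceeds $1$ precisely when $\tau>1/(q-1)$, thereby securing (i) and (ii)), condition (iii) reduces to $q((q-1)\tau)^{1/q} < (q-1)(\tau+1)$. Setting $x=(q-1)\tau$, this is $G(x):=x+(q-1)-qx^{1/q}>0$, and since $G(1)=0$ and $G'(x)=1-x^{-(q-1)/q}>0$ for $x>1$, it holds for every $\tau>1/(q-1)$ and degenerates to equality at the threshold.

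The main obstacle I anticipate is step (iii): controlling the truncated second moment demands the sharp Laplace asymptotics, including the polynomial prefactor from the endpoint, together with a verification that the chosen $c$ indeed lands below the saddle $\phi_*$ (otherwise truncation gains nothing). This splits the argument into two regimes. When $((q-1)\tau)^{1/q}<2^{q'-1}$ the endpoint-dominated estimate above applies; when $((q-1)\tau)^{1/q}\ge 2^{q'-1}$, the cutoff sits above the saddle, so $\E[\tilde U_1^2]\sim\E[U_1^2]$ and the untruncated variance bound already gives (iii) — and one checks this second regime forces $\tau\ge 2^{q'}/(q-1)>2^{q'}-2$, so the two regimes overlap with no gap. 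Finally, I would invoke the cancellation of $e^{\noiseparam\E[\growthfactor_1]}$ in the ratio $U_1/S_{\noiseparam}$ to reduce to the centered case $\E[\growthfactor_1]=0$, and use $r_{\noiseparam}\sim r_{\noiseparam}^* = \tfrac1{q'}\noiseparam^{q'}$ throughout.
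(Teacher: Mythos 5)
Your proposal is correct, but it is worth noting that the paper does not prove this statement at all: Lemma \ref{lem:LLN} is imported verbatim as Theorem 2.1 of \cite{ben2005limit}, so there is no internal proof to compare against. What you have written is a self-contained elementary derivation by truncation at $\ln b_{\noiseparam}=c\noiseparam^{q'}$ plus Chebyshev, and your bookkeeping checks out: condition (i) gives $c^q>(q-1)\tau$, condition (ii) holds for any $c>1$ because $c\mapsto c-c^q/q$ is maximized at $c=1$ with value $1/q'$, and condition (iii) at $c^q=(q-1)\tau$ reduces (after substituting $x=(q-1)\tau$) to $G(x)=x+(q-1)-qx^{1/q}>0$, which holds exactly for $x>1$, i.e.\ $\tau>1/(q-1)$; this also matches the threshold $\tilde\alpha_q=((q-1)\tau)^{1/q'}>1$ from Theorem \ref{thm:be}, as it should. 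Your handling of the regime where the cutoff would sit above the saddle $\phi_*=2^{q'-1}\noiseparam^{q'-1}$ is also right, since $2^{q'}/(q-1)>2^{q'}-2$ for all $q>1$, so the untruncated variance bound takes over with no gap. Two small points of care if you were to write this out in full: the hypothesis $W\sim W^*$ only controls the exponents to leading order in $\noiseparam^{q'}$, so the argument relies on all three conditions being \emph{strict} inequalities between the coefficients of $\noiseparam^{q'}$ (which they are, away from the threshold); and in estimating $\E[\tilde U_1^2]$ you should separately dispose of the contribution from bounded and negative $\growthfactor$, which is harmless because the paper assumes $W$ is bounded near $0$ and $e^{2\noiseparam\growthfactor}\le 1$ there relative to $A_{\noiseparam}^2/N_{\noiseparam}$. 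The trade-off versus the paper's approach is the usual one: citing \cite{ben2005limit} gets the sharper statements (fluctuation theorems, behavior at the critical points) for free, while your argument makes the LLN, which is the only ingredient actually used in Lemmas \ref{lem:replaceratio} and \ref{lem:cNbound}, self-contained and transparent about where the threshold $\tau>1/(q-1)$ comes from.
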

In the context of the Cannings model, the self-averaging of $S_{\noiseparam}$ allows us to make the approximation used in Equation \ref{eq:numoments_eq1} and ensures there is a continuous time limit process.

\subsection{Limit theorem for the partition function}

A richer mathematical structure to the REM is revealed by a careful study of the fluctuations in the partition function. This is closely related to the GCLT for iid sums over scale-invariant random variables, where one can distinguish between regimes of weak and strong self-averaging. The former refers to the case where there is a LLN, but no CLT for the sum. The precise limit Theorem for Gaussian energy distributions is stated in \cite{bovier2006statistical} and the extension to distributions of the form \ref{eq:hass} can be found in \cite{ben2005limit}. We now state (a slightly watered down) version of their result.

\begin{thm}[Theorem 2.3 from \cite{ben2005limit}]\label{thm:be} Let
 \begin{equation}\label{eq:alpha_til}
\tilde{ \alpha}_q  =  \left(\frac{q}{q'}\tau  \right)^{1/q'}. 
 \end{equation}
and 
\begin{equation}\label{eq:Z}
Z_{\noiseparam}= \frac{S_{\noiseparam}- A_{\noiseparam}}{B_{\noiseparam}}.
\end{equation}
where $B_{\noiseparam}$ is defined by 
\begin{equation}\label{eq:basic}
\tau r^*_{\noiseparam} =W^*\left(\frac{\ln B_{\noiseparam}}{\noiseparam}\right). 
\end{equation}
\begin{itemize}
\item  For $1<\tilde{\alpha}_q<2$, $Z_{\noiseparam}$ converges in distribution to an $\alpha$-stable random variable $Z_{\tilde{\alpha}}$ for which the characteristic function is  
\begin{equation}\label{eq:char}
\varphi_{\tilde{\alpha}}(u) \equiv E\left[e^{iZ_{\tilde{\alpha}_q}u}\right] = \exp \left\{- \Gamma(1-\tilde{\alpha}_q)|u|^{\tilde{\alpha}_q}e^{-i\pi \tilde{\alpha}_q {\rm sgn}(u)/2} \right\}
\end{equation}
where the parameter, $\tilde{\alpha}_q$, is given by Equation \ref{eq:alpha_til}.
\item For $\tilde{\alpha}_q>2$, $U_{\noiseparam}$ obeys a central limit theorem, meaning that it converges to a Gaussian when rescaled by the standard deviation. 
\end{itemize}
  \end{thm}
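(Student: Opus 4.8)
The plan is to prove the limit law by the method of characteristic functions, viewing $(U_i)_{i=1}^{N_\noiseparam}$ with $U_i = e^{\noiseparam\growthfactor_i}$ as a row-wise i.i.d.\ triangular array (both the number of summands $N_\noiseparam = e^{\tau r_\noiseparam}$ and the common law change with $\noiseparam$) and identifying the limit through the L\'evy--Khintchine description of limits of such arrays. Since the $U_i$ are i.i.d., the characteristic function of $Z_\noiseparam$ factorizes as
\begin{equation*}
\E\!\left[e^{iuZ_\noiseparam}\right] = \exp\!\left(-iu\,\tfrac{A_\noiseparam}{B_\noiseparam}\right)\Big(\E\big[e^{iuU_1/B_\noiseparam}\big]\Big)^{N_\noiseparam},
\end{equation*}
and because $A_\noiseparam = N_\noiseparam\E[U_1]$ the divergent centering $A_\noiseparam/B_\noiseparam$ is exactly $N_\noiseparam$ times the linear term of the summands. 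Writing $\phi_\noiseparam(u) = \E[e^{iuU_1/B_\noiseparam}]$ and $\tilde w_\noiseparam = \E[e^{iuU_1/B_\noiseparam}-1-iuU_1/B_\noiseparam]$, I would first reduce the claim to showing $N_\noiseparam\tilde w_\noiseparam \to \ln\varphi_{\tilde\alpha}(u)$, using the elementary expansion $N_\noiseparam\ln\phi_\noiseparam = N_\noiseparam(\phi_\noiseparam-1) - \tfrac12 N_\noiseparam(\phi_\noiseparam-1)^2 + \cdots$; the key point here is that the scaling relations force $N_\noiseparam\E[U_1]^2/B_\noiseparam^2 \to 0$ in the regime $\tilde\alpha_q>1$, so the quadratic and higher terms are negligible and $N_\noiseparam\ln\phi_\noiseparam - iuA_\noiseparam/B_\noiseparam = N_\noiseparam\tilde w_\noiseparam + o(1)$.

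The analytic heart is to show that, measured at the scale $B_\noiseparam$, the law of $U_1$ is asymptotically a power law of index $\tilde\alpha_q$. I would start from $\bbP(U_1>u) = e^{-W(\noiseparam^{-1}\ln u)} \sim e^{-W^*(\noiseparam^{-1}\ln u)}$, combined with the defining relation \ref{eq:basic}, namely $\tau r^*_\noiseparam = W^*(\noiseparam^{-1}\ln B_\noiseparam)$, which is precisely the statement that $N_\noiseparam\bbP(U_1>B_\noiseparam)$ is bounded away from $0$ and $\infty$. A first-order Taylor expansion,
\begin{equation*}
W^*\!\left(\tfrac{\ln B_\noiseparam + \ln x}{\noiseparam}\right) = W^*\!\left(\tfrac{\ln B_\noiseparam}{\noiseparam}\right) + \tfrac{\ln x}{\noiseparam}(W^*)'\!\left(\tfrac{\ln B_\noiseparam}{\noiseparam}\right) + \cdots,
\end{equation*}
together with $(W^*)'(\phi)=\phi^{q-1}$ and the identities $q'(q-1)=q$, $1/q'=(q-1)/q$, makes the coefficient of $\ln x$ collapse to the $\noiseparam$-independent constant $(\tfrac{q}{q'}\tau)^{1/q'}=\tilde\alpha_q$ of \ref{eq:alpha_til}. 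This yields $N_\noiseparam\bbP(U_1>xB_\noiseparam)\to c\,x^{-\tilde\alpha_q}$ for each $x>0$, i.e.\ vague convergence of the array's jump measure $N_\noiseparam\,\bbP(U_1/B_\noiseparam\in\cdot)$ to the one-sided L\'evy measure $c\,\tilde\alpha_q\,y^{-1-\tilde\alpha_q}\,dy$ on $(0,\infty)$.

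For $1<\tilde\alpha_q<2$ I would then pass this vague convergence through the integral defining $N_\noiseparam\tilde w_\noiseparam = N_\noiseparam\E[e^{iuU_1/B_\noiseparam}-1-iuU_1/B_\noiseparam]$, obtaining in the limit $\int_0^\infty(e^{iuy}-1-iuy)\,c\,\tilde\alpha_q\,y^{-1-\tilde\alpha_q}\,dy$; this is the standard L\'evy--Khintchine integral of a maximally right-skewed $\tilde\alpha_q$-stable law, and evaluating it (fixing $c$ through the normalization \ref{eq:basic}) reproduces $\ln\varphi_{\tilde\alpha}(u) = -\Gamma(1-\tilde\alpha_q)|u|^{\tilde\alpha_q}e^{-i\pi\tilde\alpha_q\,\mathrm{sgn}(u)/2}$ of \ref{eq:char}. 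For $\tilde\alpha_q>2$ the same factorization is used, but now the truncated second moment at scale $B_\noiseparam$ is finite and dominant, the jump measure carries no $y^2$-integrable mass in the limit, and only the quadratic part of $e^{iuy}-1-iuy$ survives, giving a Gaussian limit once $S_\noiseparam$ is renormalized by its standard deviation; this I would verify directly through a Lindeberg (or Lyapunov) estimate for the array.

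The main obstacle is the integral interchange in the $1<\tilde\alpha_q<2$ case. Pointwise vague convergence of the jump measures is not enough, because $e^{iuy}-1-iuy \sim -\tfrac12 u^2y^2$ near $y=0$ while the approximating measures $N_\noiseparam\,\bbP(U_1/B_\noiseparam\in\cdot)$ place increasing mass near the origin, and because the integrand grows linearly at infinity. To dominate the convergence at both ends simultaneously I would split the integral at a fixed $\delta>0$: on $(\delta,\infty)$ vague convergence applies directly, while on $(0,\delta]$ I would use Potter-type two-sided bounds on the regularly varying tail (whose validity rests on the smoothness of $W$ and the uniformity of $W\sim W^*$ assumed in \ref{eq:hass}) to bound the truncated second moment uniformly in $\noiseparam$, which is finite precisely because $1<\tilde\alpha_q<2$, and then send $\delta\to0$. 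Controlling these truncation estimates --- equivalently, establishing the regular variation with enough uniformity to guarantee uniform integrability of $y^2$ against the jump measures --- is the technical crux of the argument.
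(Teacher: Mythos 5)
The paper does not actually prove this statement: Theorem \ref{thm:be} is imported wholesale from the cited reference, and the only supporting material in the text is the heuristic paragraph that follows it, which fixes $B_{\noiseparam}$ through Equation \ref{eq:basic} and reads off $\tilde{\alpha}_q$ as the coefficient of $\ln u$ in the first--order Taylor expansion of $W^*$ about $\ln B_{\noiseparam}/\noiseparam$ (Equations \ref{eq:basic_taylor}--\ref{eq:alpha_rem}). Your proposal therefore goes beyond the paper rather than paralleling it: its ``analytic heart'' coincides exactly with the paper's heuristic, and you embed that computation in the standard triangular--array / L\'evy--Khintchine framework, which is also the route of the cited reference. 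The skeleton is correct: the factorization of the characteristic function, the cancellation of the centering against the linear term, the vague convergence $N_{\noiseparam}\bbP(U_1>xB_{\noiseparam})\to x^{-\tilde{\alpha}_q}$, and the evaluation $\int_0^\infty(e^{iuy}-1-iuy)\,\tilde{\alpha}\,y^{-1-\tilde{\alpha}}dy=-\Gamma(1-\tilde{\alpha})|u|^{\tilde{\alpha}}e^{-i\pi\tilde{\alpha}\,\mathrm{sgn}(u)/2}$ do reproduce \ref{eq:char}. You also correctly locate the technical crux in the uniform integrability of $y^2$ near the origin against the jump measures.

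Three points deserve sharpening. First, your claim that $N_{\noiseparam}\E[U_1]^2/B_{\noiseparam}^2\to 0$ ``in the regime $\tilde{\alpha}_q>1$'' is not true as stated: writing the exponent as a multiple of $\noiseparam^{q'}$, its coefficient is $\tilde{\alpha}_q^{q'}/q+2/q'-2\tilde{\alpha}_q^{q'/q}$, which is negative for $1\le\tilde{\alpha}_q\le 2$ but becomes positive for large $\tau$; the estimate holds exactly where you use it, but the stated range is wrong. Second, Potter bounds do not literally apply, because for fixed $\noiseparam$ the tail of $U_1$ is not regularly varying --- the power law emerges only at scale $B_{\noiseparam}$ as $\noiseparam\to\infty$. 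The correct substitute is the convexity of $W^*$, which gives the global bound $W^*(\eta+h)\ge W^*(\eta)+h\,(W^*)'(\eta)$ and hence $N_{\noiseparam}\bbP(U_1>yB_{\noiseparam})\lesssim y^{-\tilde{\alpha}_{\noiseparam}}$ uniformly in $y$, from which the truncated second moment is $O(\delta^{2-\tilde{\alpha}_q})$. Third, and most substantively, your normalization requires $N_{\noiseparam}\bbP(U_1>B_{\noiseparam})=\exp\{\tau(r_{\noiseparam}-r^*_{\noiseparam})-(W-W^*)(\ln B_{\noiseparam}/\noiseparam)\}\to 1$, but the hypotheses $W\sim W^*$ and $r_{\noiseparam}\sim r^*_{\noiseparam}$ only control these differences to $o(\noiseparam^{q'})$, which does not force them to vanish. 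This gap is inherited from the ``watered down'' statement itself (the original reference imposes sharper second--order tail conditions and defines the normalizing constants through $W$ rather than $W^*$), so it is not a flaw you introduced, but a complete proof must either strengthen \ref{eq:hass} or define $B_{\noiseparam}$ via the exact $W$ and $r_{\noiseparam}$.
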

  
  Briefly, the $\alpha$-stable distributions mentioned in this result are defined by the property that they are equal in distribution to linear combinations of realizations of themselves:  $Z$ is $\alpha$-stable if and only if there are constant $c,d,e$ such that $Z = cZ_1 + dZ_2 + e$ for random two variables $Z_1$ and $Z_2$ equal in distribution to $Z$. The GCLT states that $\alpha$-stable distributions arise as limits of sums of iid random variables whose variances are not necessarily finite  \cite{amir2020thinking,zolotarev1986one}. Hence, our result is playing the role of Theorem \ref{thm:be} for the Cannings model by expanding Theorem \ref{thm:power} to the ``large but finite'' regime.

Much like Theorem \ref{thm:main}, the idea behind Theorem \ref{thm:be} is that in the transition regime ($1<\tilde{\alpha}<2$), the sum will be dominated by the maximum. 
From Equation \ref{eq:scale_general} and Equation \ref{eq:hass}, we have 
\begin{eqnarray*}\label{eq:extremal}
\bbP\left({\rm max}_{1\le i\le N} U_i> u\right) &= 1- \left(1-\bbP\left(e^{\noiseparam \growthfactor}>u\right)\right)^N \\
&\sim N\bbP\left(e^{\noiseparam \growthfactor}>u\right)  \sim e^{\tau r_{\noiseparam}-W^*(\ln u/\noiseparam)}.
\end{eqnarray*}
The left-hand side will approach one as $\noiseparam \to \infty$, so in order to obtain a well-defined limit we need to look at deviations on a scale $B_{\noiseparam}$, which is increasing with $\noiseparam$. To this end, we replace $u$ with $uB_{\noiseparam}$ in the exponent of Equation \ref{eq:extremal} and make the approximation 
\begin{eqnarray}\label{eq:basic_taylor}
&\tau r_{\noiseparam}-W^*\left(\frac{\ln u + \ln B_{\noiseparam}}{\noiseparam}\right)  \approx \tau r_{\noiseparam}-W^*\left(\frac{\ln B_{\noiseparam}}{\noiseparam} \right)-\frac{\ln u}{\noiseparam} (W^*)'\left(\frac{\ln B_{\noiseparam}}{\noiseparam} \right).
\end{eqnarray}
In order for this to have a large $\noiseparam$ limit, the first two terms must cancel, indicating that $B_{\noiseparam}$ should satisfy Equation \ref{eq:basic}.   Since $B_{\noiseparam} = O(\noiseparam^{q'})$, it follows from Equation \ref{eq:basic} that the coefficient of $\ln u$ in Equation \ref{eq:basic_taylor} is independent of $\noiseparam$ and given by 
 \begin{equation}\label{eq:alpha_rem}
\tilde{\alpha}_q = \frac{1}{\noiseparam} (W^*)'\left(\frac{\ln B_{\noiseparam}}{\noiseparam} \right)  \end{equation}
 which simplifies to Equation \ref{eq:alpha_til}.
  This plays the role of $\alpha$ in our analysis of the coalescent process. The two parameters agree only at $\alpha=\tilde{\alpha}=1$, which is the transition to the frozen state; see Figure \ref{fig:alpha}. Interestingly, this implies the regime of weak-self averaging in Theorem \ref{thm:be} ($1<\tilde{\alpha}_q<2$) does not exactly correspond to the regime of multiple merger coalescents in Theorem \ref{thm:main} ($1<\alpha_q<2$), although the two coincide in the limit $q\to 1$. 
 
\begin{figure}[h]
\centering
\includegraphics[width=0.3\textwidth]{./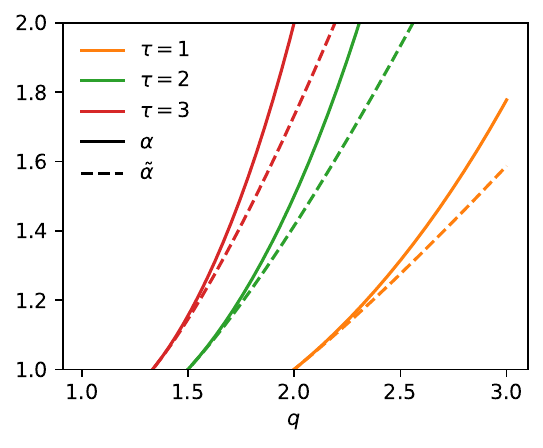}
\caption{$\alpha$ compared to $\tilde{\alpha}$ as a function of the tail exponent $q$ for different values of $\tau$. The lower limit of the plot is the critical point $\alpha=1$ where both expressions agree. Above this point $\alpha>\tilde{\alpha}$ for all $q$.   }\label{fig:alpha}
\end{figure}

\subsection{Gibbs measure}
 
In addition to the partition function, there is an interest in understanding fluctuations in the statistical weights
\begin{equation}
P_{\sigma} = \frac{e^{\noiseparam\growthfactor_{\sigma}}}{S_{\noiseparam}}, 
\end{equation}
in the REM and other disordered systems models.  In the large-$\noiseparam$ limit these weights approach a measure on the infinite dimensional hypercube $\mcC_{\infty} = \{-1,1\}^{\naturals}$ called the \emph{Gibbs measure} -- see \cite{rassoul2015course} for an introduction to this formalism. The question of how the Gibbs measure  fluctuates between replicates of the energies is closely related to the coalescent process, since $\{U_i/S\}_{i=1,\dots,N_{\noiseparam}}$ (asymptotically) have the same distribution as $P_{\sigma}$ when the configurations are projected to the one dimensional lattice $[2^n]=\{1,\dots,2^n\}$. The projected Gibbs measure  approaches the Lebesgue measure on the unit interval in the high temperature regime \cite{bovier2006statistical}.  However, the main focus in this context of the REM appears to have been the low temperature regime, $\tilde{\alpha}_q<1$. 

There is already an established connection between coalescent processes and the Gibbs measure via Derrida's generalized REM (GREM). In this model, the energies are no longer independent, but drawn from a Gaussian random field on $\mcC_n$ whose correlation function depends on a certain ultrametric distance between configurations -- see for a precise description \cite{berestycki2009recent}. This ultrametric distance induces a hierarchical structure to the configuration space. Ruelle gave a mathematical formulation of Derrida's models \cite{ruelle1987mathematical} and the connection to coalescent processes was made in \cite{bolthausen1998ruelle} in the context of their abstract cavity method. The authors show that by sampling configurations from the Gibbs measure and constructing a genealogical tree based on the hierarchies induced by the distance, one obtains (up to a time-change) a continuous time coalescent process now referred to as the Bolthausen-Sznitman coalescent. This processes is nothing but the  $\beta$ coalescent with $\tilde{\alpha}_q=1$.

To understand what Theorem \ref{thm:main} tells us about $P_{\sigma}$, suppose we sample two configurations $i$ and $j$ from the equilibrium distribution and set 
\begin{equation}
\varrho_{i,j} = 1_{\{i=j\}}, 
\end{equation}
which Derrida refers to this as the replica overlap. It is not too difficult to see that $\E[\rho_{1,2}]$ is asymptotic to the inverse coalescent time, $c_{\noiseparam}$: First, notice that conditional on $\{\growthfactor_i\}_{i \in [N]}$ the distribution of $\varrho_{1,2}$ is (see \cite{derrida2021one})
\begin{equation}
\E[\varrho_{1,2}|\{\Phi_{k}\}_{k \in \mcC_n}]  = \sum_{i=1}^{2^n} \left(\frac{e^{\noiseparam \growthfactor_i}}{S_{\noiseparam}} \right)^2
\end{equation}
The joint distribution of the terms in the sum is asymptotic to  that of $\{(\nu_i/N_{\noiseparam})^2\}_{i\in [N]}$. Therefore, recalling Equation \ref{eq:cNdef}, we can see that $\E[\rho_{1,2}] \sim c_{\noiseparam}$. 
It is well known that in the high temperature regime ($\tilde{\alpha}>1$)
\begin{equation}
\E[\varrho_{1,2}]   \to 0,
\end{equation}
while in the low temperature phase Derrida derives an expression in terms of the Beta functions.

Now suppose we sample configurations of the Gibbs measure from $n$ replicates of the REM and consider the event that these configurations are not unique and instead $k_i>1$ come from configuration $i$ for $i=1,\dots,r$ with $\sum_{i=1}^rk_i =n$. Such events are simply the analogues of $(k_1,\dots,k_r;0)$-collisions defined in Section \ref{sec:background_col}, and they occur with probability 
\begin{equation}
N_{\noiseparam}^{r-n}\E\left[\prod_{i=1}^r\left(P_{\sigma} \right)_{k_i}\right]. 
\end{equation}
This is asymptotic to the expression appearing in the definition of $\lambda_{n,k_1,\dots,k_r;s}$ (Equation \ref{eq:lambdabk}) and in the REM is related to the higher order replica overlaps. 
We can then ask what chance of these events is when we take $1/c_{\noiseparam}$ replicates, which yields $\lambda_{n,k_1,\dots,k_r;s}$. Therefore, Theorem \ref{thm:main} tells us there is a phase transition in the typical composition of our samples at the critical point $\alpha=2$. As we have explained above, this transition happens at a lower temperature (higher $\beta$) than the breakdown of the CLT for the partition function at $\tilde{\alpha}_q =2$.

 \section{Proof of Theorem \ref{thm:main}}\label{sec:proofs}
 
In order to simplify notation in the proofs, we set $\alpha = \alpha_q$.

\begin{proof}[Proof of Lemma \ref{lem:numoments}]

The idea of the proof is that most of the contribution to the moments of $\nu_1$ come from the event $\nu_1 > A_{\noiseparam}$. Let $f_{\growthfactor_1}$ denote the density of $\growthfactor_1$; hence 
\begin{equation*} 
\int_{\phi}^{\infty}f_{\growthfactor_1}(x)dx = e^{-W(\phi)}.
\end{equation*}
Define
\begin{equation*}
\eta_{\noiseparam}(z) = \frac{1}{\noiseparam} \ln (A_{\noiseparam}z).
\end{equation*}
By Lemma \ref{lem:replaceratio} and Lemma \ref{lem:nu1ratio} in the \ref{app:lems}, 
\begin{eqnarray*}
E[\nu_1^k] &\sim N^k  \int_{\noiseparam^{-1}\ln(1)}^{\infty} \frac{e^{k\noiseparam \phi}A_{\noiseparam}^{-k}}{(e^{\noiseparam \phi}A_{\noiseparam}^{-1} +1 )^k} f_{\growthfactor_1}(\phi)d\phi\\
&= e^{k\tau r_{\noiseparam}}\int_{A_{\noiseparam}^{-1}}^{\infty}  \frac{z^{k-1}}{(z+1 )^k}\frac{1}{\noiseparam}W'\left(\eta(\noiseparam,z)\right)e^{-W\left(\eta(\noiseparam,z)\right)}dz
\end{eqnarray*}
where we have changed variables $z = u/A_{\noiseparam}$. This is the step which breaks down for $\alpha\le 1$, since Lemma \ref{lem:replaceratio} uses Lemma \ref{lem:LLN}. 

Now define $R_{\noiseparam}(z)$ and $K_{\noiseparam}(z)$ by 
\begin{eqnarray}
W\left(\eta_{\noiseparam}(z)\right)  & =  W\left(\eta_{\noiseparam}(1)\right)  +\hat{\alpha}_{\noiseparam}  \ln z + R_{\noiseparam}\left(z\right)\\
W'\left(\eta_{\noiseparam}(z) \right)  & =\noiseparam\hat{\alpha}_{\noiseparam}  + \noiseparam K_{\noiseparam}(z)
\end{eqnarray}
where
\begin{equation}
\hat{\alpha}_{\noiseparam}= \noiseparam^{-1}W'\left(\eta_{\noiseparam}(1) \right).
\end{equation}
Observe that $\hat{\alpha}_{\noiseparam}\sim \alpha$ and by Equation \ref{eq:hass}
there is a constant $C'$ such that 
\begin{equation*}
K_{\noiseparam}(z) \le  \frac{C'\ln z}{\noiseparam}
\end{equation*}
for large enough $\noiseparam$.

Because $R_{\noiseparam}(z)>0$ for large enough $\noiseparam$ for all $z$, 
\begin{eqnarray}
&\int_{A_{\noiseparam}^{-1}}^{\infty}  \frac{z^{k-1}}{(z+1 )^k}\frac{1}{\noiseparam}W'\left(\eta_{\noiseparam}(z)\right)e^{-W\left(\eta_{\noiseparam}(z)\right)}dz \\
&\quad\le e^{- W(\eta_{\noiseparam}(1))}\int_{A_{\noiseparam}^{-1}}^{\infty}  \frac{z^{k-1-\hat{\alpha}_{\noiseparam}}}{(z+1 )^k}\frac{1}{\noiseparam}W'\left(\eta_{\noiseparam}(z)\right)z. 
\end{eqnarray}
Assume $\noiseparam$ is large enough that $|\hat{\alpha}_{\noiseparam}-\alpha|<\varepsilon$. Then 
\begin{eqnarray}
\int_{A_{\noiseparam}^{-1}}^{\infty}  \frac{z^{k-1-\hat{\alpha}_{\noiseparam}}}{(z+1 )^k}\frac{W'\left(\eta_{\noiseparam}(z)\right)}{\noiseparam}dz
&\le \int_{0}^{\infty}  \frac{z^{k-1-\alpha(\tau)+\varepsilon}}{(z+1 )^k}\frac{W'\left(\eta_{\noiseparam}(z)\right)}{\noiseparam}dz.
\end{eqnarray}
As $\epsilon \to 0$, using that the logarithmic term vanishes, 
\begin{eqnarray}
\int_{0}^{\infty}  \frac{z^{k-1-\alpha+\varepsilon}}{(z+1 )^k}\frac{W'\left(\eta_{\noiseparam}(z)\right)}{\noiseparam}dz &\to \alpha \int_{0}^{\infty}  \frac{z^{k-1-\alpha}}{(z+1 )^k} dz \\
 &= \alpha \frac{\Gamma(\alpha)\Gamma(k-\alpha)}{\Gamma(k)} = \alpha B(\alpha,k-\alpha).
\end{eqnarray}

Similarly, for large enough $\noiseparam$ and any $L$
\begin{eqnarray}
&\int_{A_{\noiseparam}^{-1}}^{\infty}  \frac{z^{k-1}}{(z+1 )^k}\frac{W'\left(\eta_{\noiseparam}(z)\right)}{\noiseparam}e^{-W\left(\eta_{\noiseparam}(z)\right)}dz \\
&\quad\ge e^{- R_{\noiseparam}(L)- W(\eta_{\noiseparam}(1))}\int_{A_{\noiseparam}^{-1}}^{L}  \frac{z^{k-1-\alpha - \varepsilon}}{(z+1 )^k}\frac{W'\left(\eta_{\noiseparam}(z)\right)}{\noiseparam}dz \\
&\quad \sim e^{- R_{\noiseparam}(L)- W(\eta_{\noiseparam}(1))}\alpha \int_{A_{\noiseparam}^{-1}}^{L}  \frac{z^{k-1-\alpha}}{(z+1 )^k}dz
\end{eqnarray}

We can expand $R_{\noiseparam}$ as
  \begin{equation}\label{eq:R_expansion}
  R_{\noiseparam}(L_{\noiseparam})  = \frac{1}{2}\left(\frac{\ln L_{\noiseparam}}{\noiseparam}\right)^2W''\left( \frac{\ln A_{\noiseparam}}{\noiseparam} \right)
  + \frac{1}{6}\left(\frac{\ln L_{\noiseparam}}{\noiseparam}\right)^3W'''\left(\frac{\ln A_{\noiseparam}}{\noiseparam} \right) + \cdots. 
  \end{equation}
Therefore, by the definition of $W^*$ (Equation \ref{eq:hass}), 
  \begin{equation}
 \frac{d^n}{dz^n}W^*(z) = (q-1)\cdots(q-n+1)z^{q-n}
  \end{equation}
  which means the coefficient of $\ln L_{\noiseparam}/\noiseparam$ grows as a power law in $\noiseparam$ with exponent 
  \begin{equation*}
  (q-n)(q'-1) = \frac{q-n}{q-1}.
  \end{equation*} 
Now set $L_{\noiseparam} = B\noiseparam^{q'/4}$ for some constant $B$. The $\noiseparam$ dependence of the $n$th term is 
  \begin{equation*}
n(q'/2-1) + (q'-1)(q-n) = \frac{4-3n}{4}q',
  \end{equation*}
so in the large $\noiseparam$ limit $R_{\noiseparam}(L_{\noiseparam})$ will vanish. In summary,
 \begin{equation}
 \int_{A_{\noiseparam}^{-1}}^{\infty}  \frac{z^{k-1}}{(z+1 )^k}\frac{W'\left(\eta_{\noiseparam}(z)\right)}{\noiseparam}e^{-W\left(\eta_{\noiseparam}(z)\right)}dz 
\sim   e^{- W^*(\eta_{\noiseparam}(1))}\alpha B(\alpha,k-\alpha)
 \end{equation}
Finally, note that 
 \begin{equation}
  e^{- W(\eta_{\noiseparam}(1))} = \bbP\left(e^{\noiseparam \growthfactor_1}> A_{\noiseparam}\right),
 \end{equation}
 so  
 \begin{equation}
  \E[(\nu_{1})_k]  \sim  \alpha B(\alpha,k-\alpha)N_{\noiseparam}^k \bbP(e^{\noiseparam \growthfactor_1}> A_{\noiseparam}).
 \end{equation}

\end{proof}

If $\alpha>k$, the integral derived above diverges and the bounds are no longer useful.  The divergence comes from the very small values of $z$ (meaning small $\growthfactor$ relative to $A_{\noiseparam}$) when we make the linear approximation.  In this case, we have
\begin{eqnarray}
\E\left[\frac{e^{k\noiseparam \growthfactor_1}}{(e^{\noiseparam \growthfactor_1} + A_{\noiseparam})^k}\right] &\sim\frac{1}{A_{\noiseparam}^k} \E[e^{k\noiseparam \growthfactor_1}] \\
&\sim e^{-\left(k(1+\tau)- k^{q'}\right)r_{\noiseparam}^*}.
\end{eqnarray}
Note that $\left(k(1+\tau)- k^{q'}\right)r_1^*\ge W^*((1+\tau)r_1^*)$.  In fact, these are equal exactly at $\alpha_q = 2$.

\begin{proof}[Proof of Theorem \ref{thm:main}]
Lemma \ref{lem:numoments} implies

\begin{equation}
c_{\noiseparam} \sim \alpha B(\alpha,2-\alpha)N_{\noiseparam} \bbP(e^{\noiseparam \growthfactor_1}> A_{\noiseparam}) 
  \sim  \alpha B(\alpha,2-\alpha) e^{-W^*\left(\eta_{\noiseparam}(1)\right) + \tau r_{\noiseparam}^*}.
\end{equation}
Simplifying the exponent, we have
\begin{eqnarray}
W^*\left(\eta_{\noiseparam}(1)\right) + \tau r_{\noiseparam}^* &= W^*\left(\frac{1}{\noiseparam }(1+\tau)r_{\noiseparam}^*\right) + \tau r^*(\noiseparam) \\
&=\left(\alpha^{q'/q} - \frac{1}{q'} - \frac{\alpha^{q'}}{q}\right)\noiseparam^{q'}. 
\end{eqnarray}
The prefactor of $\noiseparam^{q'}$ evaluates to $0$ at $\alpha=1$. Since $q>1$,
\begin{equation}
\frac{d}{d\alpha}\left(\alpha^{q'/q} - \frac{1}{q'} - \frac{\alpha^{q'}}{q}\right) 
= \frac{\alpha^{q'/q} - \alpha^{q'}}{\alpha(q-1)} <0, 
\end{equation}
so that $c_{\noiseparam}^{-1} \to \infty$ for $\alpha>1$. 

The merger rates in Equation \ref{eq:mergerates-beta} follow from Lemma \ref{lem:numoments}, 
so it remains to check \ref{condB}.
Applying Lemma \ref{lem:cNbound} and \ref{lem:numoments},
\begin{eqnarray}
\E[\nu_1^2\nu_2^2] &\sim N_{\noiseparam}^4\E\left[\frac{U_1^2U_2^2}{S_{\noiseparam}^4}\right]  \le  N_{\noiseparam}^4\E\left[\frac{e^{\noiseparam(\growthfactor_1+\growthfactor_2)}}{(e^{2\noiseparam \growthfactor_1} \vee A_{\noiseparam}^2)(e^{2\noiseparam \growthfactor_1}  \vee A_{\noiseparam}^2)}\right] \\
& =  N_{\noiseparam}^4 \E\left[\frac{e^{2\noiseparam \growthfactor_1}}{e^{2\noiseparam \growthfactor_1} \vee A_{\noiseparam}^2}\right]^2 \le (A')^2 N_{\noiseparam}^2c_{\noiseparam}^2
\end{eqnarray}
Notice that we are again using the idea that the expectation is dominated by the event $e^{\noiseparam \growthfactor_1}>A_{\noiseparam}$, since the final expression above is asymptotic to $(A')^2 N_{\noiseparam}^4\bbP(e^{\noiseparam \growthfactor_1}>A_{\noiseparam})^2$. 
It follows that 
\begin{equation}
c_{\noiseparam}\frac{\E[\nu_1^2\nu_2^2]}{N_{\noiseparam}^2} = c_{\noiseparam} \to 0.
\end{equation}

When $\alpha>2$, Lemma \ref{lem:numoments} and Equation \ref{eq:ckingman} implies 
\begin{equation}
c_{\noiseparam}^{-1}\frac{\E\left[\nu_1^3 \right]}{N_{\noiseparam}^2} \sim  \frac{\E\left[\nu_1^3 \right]}{N_{\noiseparam}\E[\nu_1^2]}
=  \frac{N_{\noiseparam}^2\bbP(e^{\noiseparam \growthfactor_1}>A_{\noiseparam})}{\E[\nu_1^2]} \to 0. 
\end{equation}
It follows that $\lambda_{3,3}\to0$, and by Equation \ref{eq:lambda_const} $\lambda_{n,n}=0$ for $n>3$.

 \end{proof}

\section{Discussion}

In this article we have studied the asymptotics of genealogies in the Cannings model when both the population sizes and offspring variation are simultaneously taken to be large. Such limits are not covered by previous results for scale invariant offspring distribution, since in that setting the offspring variation is infinite for finite $N$. Our analysis rests on a certain scaling scaling assumption under which the total number of offspring produced in a generation is equivalent to the partition function of the REM and the offspring numbers $\nu_i$ are related to the Gibbs measure.  As with the REM, competition between fluctuations in growth rates (energies in the REM) with averaging over an increasing system-size (our log population size) leads to a form of weak self-averaging and anomalous scaling of the coalescent time. 

Our main finding is that the $\beta$-coalescent -- a previous studied model of coalescence in populations where variation in offspring is infinite -- also emerge from models where the tail of the offspring distribution is thin, but large fluctuations are not too unlikely.  This is related to the existing limit theorems for the Cannings model proved in \cite{schweinsberg2003coalescent} in the same way that the fluctuation theorem in \cite{ben2005limit} is related to the GCLT for iid sums. Our result does not describe the critical point and low temperature regime, although at least for $q=2$ these can be likely be deduced from previous results on the REM  -- see \cite{bovier2006statistical}.  The limit coalescent processes are the discrete-time $\Xi$-coalescents with simultaneous multiple mergings of lineages described in \cite{mohle2001classification}.

Biologically, Theorem \ref{thm:main} suggests the $\beta$-coalescent serves as a more universal description of neutral evolution in the presence of highly skewed offspring distributions than might be expected. This also indicates that little information about the demographic structure of a population is contained in the coalescent process itself.

\subsection*{Acknowledgments}
I am thankful to Daniel Weissman, Takashi Okada, Linh Huynh, Oskar Hallatschek  and Ariel Amir for helpful discussions and feedback on early versions of this work. Three anonymous referees provided detailed feedback which greatly improved the quality of this manuscript. Finally, I would like to thank the organizers of the conference ``Evolution Beyond the Classical Limits'' held at the Banff International Research Station (BIRS), where preliminary work on this manuscript was carried out.

\appendix
\addtocontents{toc}{\fixappendix}

\section{Additional Lemmas}\label{app:lems}

We will assume $U_i>1$ and therefore $S_N> N$. It is straightforward to obtain the results for the more general case where only Equation \ref{eq:uass} holds. 

\begin{lem}\label{lem:nu1tailbounds}
For $0< \varepsilon <x$
\begin{equation}
\limsup_{\noiseparam \to \infty} N_{\noiseparam}T_{\noiseparam} \bbP\left(\frac{\nu_1}{N_{\noiseparam}}>x\right)\le  \limsup_{\noiseparam \to \infty} NT\bbP\left(\frac{U_1}{S_{\noiseparam}} \ge x - \varepsilon \right)
\end{equation}
and 
\begin{equation}
\limsup_{\noiseparam \to \infty}N_{\noiseparam}T_{\noiseparam} \bbP\left(\frac{\nu_1}{N_{\noiseparam}}>x\right)\ge  \liminf_{\noiseparam \to \infty}N_{\noiseparam}T_{\noiseparam}\bbP\left(\frac{U_1}{S_{\noiseparam}} \ge x + \varepsilon \right)
\end{equation}
\end{lem}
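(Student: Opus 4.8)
The plan is to exploit the fact that, conditional on the pre-resampling offspring sizes $\{U_i\}_{i=1}^{N_{\noiseparam}}$, the resampled count $\nu_1$ is hypergeometric: it is the number of type-$1$ items obtained when $N_{\noiseparam}$ individuals are drawn without replacement from the pool of $S_{\noiseparam}$ offspring, of which $U_1$ descend from individual $1$ (this is precisely the hypergeometric sampling rule defining the model). Consequently the conditional mean of $\nu_1/N_{\noiseparam}$ is exactly $U_1/S_{\noiseparam}$, and $\nu_1/N_{\noiseparam}$ concentrates sharply around it. The only quantitative input needed is Hoeffding's inequality for sampling without replacement, which gives
\begin{equation*}
\bbP\left(\left|\frac{\nu_1}{N_{\noiseparam}} - \frac{U_1}{S_{\noiseparam}}\right| \ge \varepsilon \,\Big|\, \{U_i\}\right) \le 2e^{-2N_{\noiseparam}\varepsilon^2}.
\end{equation*}
The comparison of the two tails then reduces to absorbing this error, which is doubly exponentially small in $N_{\noiseparam}$ and therefore negligible against the prefactor $N_{\noiseparam}T_{\noiseparam}$.

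For the upper bound I would split according to the conditional mean,
\begin{equation*}
\bbP\left(\frac{\nu_1}{N_{\noiseparam}} > x\right) = \bbP\left(\frac{\nu_1}{N_{\noiseparam}} > x,\ \frac{U_1}{S_{\noiseparam}} \ge x-\varepsilon\right) + \bbP\left(\frac{\nu_1}{N_{\noiseparam}} > x,\ \frac{U_1}{S_{\noiseparam}} < x-\varepsilon\right).
\end{equation*}
The first term is at most $\bbP(U_1/S_{\noiseparam} \ge x-\varepsilon)$. On the event in the second term, $\{\nu_1/N_{\noiseparam} > x\}$ forces an upward deviation $\nu_1/N_{\noiseparam} - U_1/S_{\noiseparam} > \varepsilon$, so conditioning on $\{U_i\}$ and applying the Hoeffding bound shows that this term is at most $2e^{-2N_{\noiseparam}\varepsilon^2}$. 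Multiplying by $N_{\noiseparam}T_{\noiseparam}$ and using that $N_{\noiseparam}=e^{\tau r_{\noiseparam}}$ grows exponentially, so that $N_{\noiseparam}T_{\noiseparam}e^{-2N_{\noiseparam}\varepsilon^2}\to 0$ for any at-most-exponential $T_{\noiseparam}$, I take $\limsup$ to obtain the first inequality.

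For the lower bound I would instead discard all but the favorable part of the event,
\begin{equation*}
\bbP\left(\frac{\nu_1}{N_{\noiseparam}} > x\right) \ge \bbP\left(\frac{\nu_1}{N_{\noiseparam}} > x,\ \frac{U_1}{S_{\noiseparam}} \ge x+\varepsilon\right) \ge \left(1 - 2e^{-2N_{\noiseparam}\varepsilon^2}\right)\bbP\left(\frac{U_1}{S_{\noiseparam}} \ge x+\varepsilon\right),
\end{equation*}
where the last step uses that on $\{U_1/S_{\noiseparam} \ge x+\varepsilon\}$ the complementary event $\{\nu_1/N_{\noiseparam} \le x\}$ is a downward deviation of at least $\varepsilon$ from the conditional mean, again controlled by the Hoeffding estimate. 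Multiplying by $N_{\noiseparam}T_{\noiseparam}$, taking $\liminf$, and using $1-2e^{-2N_{\noiseparam}\varepsilon^2}\to 1$ gives $\liminf N_{\noiseparam}T_{\noiseparam}\bbP(\nu_1/N_{\noiseparam}>x) \ge \liminf N_{\noiseparam}T_{\noiseparam}\bbP(U_1/S_{\noiseparam}\ge x+\varepsilon)$; since the left side is dominated by the corresponding $\limsup$, the second inequality follows.

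The only genuine obstacle is the concentration step: one must confirm that $\nu_1$ given $\{U_i\}$ is exactly hypergeometric (immediate from the model definition) and invoke the correct without-replacement tail bound, after which the argument is pure bookkeeping. The one point worth verifying carefully is that the prefactor $N_{\noiseparam}T_{\noiseparam}$ cannot outgrow the Gaussian decay $e^{-2N_{\noiseparam}\varepsilon^2}$, which is guaranteed by the exponential growth of $N_{\noiseparam}$ in $\noiseparam$.
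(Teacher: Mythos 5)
Your proposal is correct and follows essentially the same route as the paper: condition on $\{U_i\}$, use that $\nu_1$ is hypergeometric with conditional mean $N_{\noiseparam}U_1/S_{\noiseparam}$, split on the event $\{U_1/S_{\noiseparam} \gtrless x \mp \varepsilon\}$, and absorb the Hoeffding-type error $e^{-2N_{\noiseparam}\varepsilon^2}$ against the prefactor $N_{\noiseparam}T_{\noiseparam}$ (the paper phrases this as $T$ growing sub-exponentially in $N$). No gaps.
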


\begin{proof}
Following the proof of  \cite[Theorem 4]{schweinsberg2003coalescent} we condition on $\{X_i\}$ to obtain 
\begin{eqnarray*}
\lim_{\noiseparam \to \infty}NT\bbP(\nu_1/N > x) &= \lim_{\noiseparam \to \infty}NT\E[\bbP(\nu_1/N > x|\{X_i\})]\\
&= \lim_{\noiseparam \to \infty}NT\E[\bbP(\nu_1/N > x|\{X_i\})1_{\{X_1/S\le x-\varepsilon\}}]\\
&\quad\quad\quad+ \lim_{\noiseparam \to \infty}NT\E[\bbP(\nu_1/N > x|\{X_i\})1_{\{X_1/S\ge x-\varepsilon\}}]
\end{eqnarray*}
Using that $\nu_1|\{X_i\}$ has a hypergeometric distribution with parameters $(S,X_1,N)$, 
\begin{eqnarray*}
\E[\bbP(\nu_1/N > x|\{X_i\})1_{\{X_1/S\le x-\varepsilon\}}] &<  \bbP(\nu_1/N > x|X_1 = S(x-\varepsilon))\bbP(X_1/S\le x-\varepsilon)\\
&= \bbP(\nu_1 > Nx|X_1/S  +\varepsilon = x)\bbP(X_1/S\le x-\varepsilon)\\
&\le e^{-2\varepsilon^2N}
\end{eqnarray*}
Since $T$ grows sub-exponentially with $N$, 
\begin{eqnarray*}
\limsup_{\noiseparam \to \infty}NT\bbP(\nu_1/N > x)  &=  \limsup_{\noiseparam \to \infty}NT\E[\bbP(\nu_1/N > x|\{X_i\})1_{\{X_1/S\ge x-\varepsilon\}}]\\
&\le  \limsup_{\noiseparam \to \infty}NT\bbP(X_1/S\ge x-\varepsilon)
\end{eqnarray*}
which is the upper bound. 

The lower bound is obtained similarly using the variable $1_{\{X_1/S\ge x+\varepsilon\}}$. 
\end{proof}

\begin{lem}[Similar to Lemma 6 of \cite{schweinsberg2003coalescent}]\label{lem:nu1ratio}
For $r\ge 1$ and $k_1,\dots,k_r\ge 2$, 
\begin{equation*}
 \E\left[\prod_{i=1}^r(\nu_i)_{k_i}\right]
 \sim N^{\sum_{i=1}^rk_i}\E\left[\frac{e^{\sum_{i=1}^rk_i\noiseparam \growthfactor_i}}{S_{\noiseparam}^{\sum_{i=1}^{r}k_i }}\right]
\end{equation*}

\end{lem}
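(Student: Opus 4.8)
The plan is to begin from an exact conditional identity and then remove three asymptotic discrepancies one at a time. Conditional on the offspring pool $\{U_i\}_{i=1}^{N}$, the resampled counts $(\nu_1,\dots,\nu_{N})$ are multivariate hypergeometric — we draw $N$ individuals without replacement from $S_{\noiseparam}$ offspring, of which $U_i$ descend from parent $i$ — and the joint descending factorial moments of such a vector are exactly
\begin{equation*}
\E\left[\prod_{i=1}^r(\nu_i)_{k_i}\,\Big|\,\{U_i\}\right] = \frac{(N)_K}{(S_{\noiseparam})_K}\prod_{i=1}^r(U_i)_{k_i}, \qquad K = \sum_{i=1}^r k_i,
\end{equation*}
which holds because $K$ prescribed distinct offspring all land in the sample with probability $(N)_K/(S_{\noiseparam})_K$. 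Averaging over $\{U_i\}$, the lemma reduces to the claim that $(N)_K\,\E[\prod_i(U_i)_{k_i}/(S_{\noiseparam})_K]$ is asymptotic to $N^K\,\E[\prod_i U_i^{k_i}/S_{\noiseparam}^K]$.

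First I would dispatch the two denominators, which are harmless since $K$ is fixed. Clearly $(N)_K/N^K\to1$; and because we assume $U_i>1$, so that $S_{\noiseparam}>N\to\infty$, the ratio $(S_{\noiseparam})_K/S_{\noiseparam}^K=\prod_{j=0}^{K-1}(1-j/S_{\noiseparam})$ is trapped between $\prod_{j=0}^{K-1}(1-j/N)$ and $1$ and hence tends to $1$ \emph{uniformly} in the $U_i$. Both factors therefore detach from the expectation as $1+o(1)$, leaving only the replacement of $\prod_i(U_i)_{k_i}$ by $\prod_i U_i^{k_i}$ inside $\E[\,\cdot\,/S_{\noiseparam}^K]$.

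This last replacement is the only substantive step. Using the elementary estimate $|x^{k}-(x)_{k}|\le C_k\,x^{k-1}$ for $x\ge1$ and telescoping the product, the total error is bounded by a constant times $\sum_{\ell=1}^r \E[U_\ell^{k_\ell-1}\prod_{i\ne\ell}U_i^{k_i}/S_{\noiseparam}^K]$. Writing $X=\prod_i U_i^{k_i}/S_{\noiseparam}^K\ge0$ for the main integrand, each error term is exactly $\E[X/U_\ell]$, so it suffices to show $\E[X/U_\ell]=o(\E[X])$; in words, the $X$-weighting is so strongly biased towards large $U_\ell$ — through the factor $U_\ell^{k_\ell}$ with $k_\ell\ge2$ — that $1/U_\ell$ averages to zero. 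I expect making this size-biasing precise to be the main obstacle, and I would handle it by the same tail/bulk split that drives the single-index estimate of Lemma \ref{lem:numoments} and Lemma~6 of \cite{schweinsberg2003coalescent}, cutting at $U_\ell\simeq A_{\noiseparam}$: on $\{U_\ell>A_{\noiseparam}\}$ the factor $1/U_\ell<1/A_{\noiseparam}\to0$ supplies the gain directly, while on the complementary bulk event the self-averaging of $S_{\noiseparam}$ (Lemma \ref{lem:LLN}) permits replacing $S_{\noiseparam}$ by $A_{\noiseparam}$ and factoring, after which the gain is the moment ratio $\E[U_1^{k_\ell-1}]/\E[U_1^{k_\ell}]\sim e^{((k_\ell-1)^{q'}-k_\ell^{q'})r_{\noiseparam}^*}\to0$, whose decay is forced by $q'>1$. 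Collecting the uniform denominator estimates with this subdominance of the reduced numerators then yields the stated equivalence.
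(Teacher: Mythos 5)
Your proposal is correct and follows essentially the same route as the paper: the paper's proof introduces the descent event $J_{k_1,\dots,k_r}$, writes $\E\left[\prod_{i=1}^r(\nu_i)_{k_i}\right] = (N)_{K}\,\bbP(J_{k_1,\dots,k_r})$ with $K=\sum_{i=1}^r k_i$, and evaluates $\bbP(J_{k_1,\dots,k_r})$ by conditioning on $\{\growthfactor_i\}$ --- which is exactly your multivariate hypergeometric factorial-moment identity in probabilistic clothing. The only difference is that the paper records the conditional probability directly as $\prod_i U_i^{k_i}/S_{\noiseparam}^{K}$ rather than the exact $\prod_i (U_i)_{k_i}/(S_{\noiseparam})_{K}$ and justifies only the denominator replacement (via $S_N\ge N$), whereas your final step showing $\E[X/U_\ell]=o(\E[X])$ supplies the corresponding justification for the numerator; on that point your write-up is, if anything, the more complete of the two.
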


\begin{proof}
Let $M_{k_1,k_2}^i$ be the event that individuals $k_1$ through $k_2$ descent from individual $i$ in the previous generation
and set 
\begin{equation*}
J_{k_1,\dots,k_r} =\bigcup_{i=1}^{r}M_{\sum_{j\le i}k_j,\sum_{j<i+1}k_j}^i.
\end{equation*} 
As $N \to \infty$ (or $\noiseparam \to \infty$), 
\begin{eqnarray*}
\bbP(J_{k_1,\dots,k_r}) &=  \frac{1}{(N)_{\sum_{i=1}^rk_i}}\E\left[(\nu_1)_{k_1}\cdots (\nu_r)_{k_r}\right]\\
&\sim  \frac{1}{ N^{\sum_{i=1}^rk_i}} \E\left[\prod_{i=1}^r\nu_i^{k_i}\right]
\end{eqnarray*}
Since $S_N \ge N$, 
\begin{eqnarray}
P(J_{k_1,\dots,k_r}) &= \E\left[\bbP\left(J_{k_1,\dots,k_r}\Big|\{\growthfactor_i\}_{i=1}^N\right)\right]= \E\left[\frac{e^{\sum_{i=1}^rk_i\noiseparam \growthfactor_i}}{S_{\noiseparam}^{\sum_{i=1}^{r}k_i }}\right],
\end{eqnarray}
which implies the result. 
\end{proof}

\begin{lem}[Similar to Lemma 8 of \cite{schweinsberg2003coalescent}]\label{lem:cNbound}
For each $k\ge1$ there is a constant $B_k$ such that
\begin{equation*}
\E[\nu^k] \ge B_kN_{\noiseparam}^{k}\E\left[ \frac{e^{k \noiseparam \growthfactor_1}}{e^{k \noiseparam \growthfactor_1} \vee A_{\noiseparam}^k}\right].
\end{equation*}
\end{lem}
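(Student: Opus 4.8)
The plan is to produce the lower bound in two moves: first pass from $\nu_1$ to the ratio $U_1/S_{\noiseparam}$ by a convexity argument, and then control the denominator $S_{\noiseparam}$ from above on a good event. Conditioned on $\{\growthfactor_i\}_{i=1}^{N_{\noiseparam}}$ the variable $\nu_1$ is hypergeometric with mean $N_{\noiseparam}U_1/S_{\noiseparam}$, where $U_1=e^{\noiseparam\growthfactor_1}$. Since $t\mapsto t^k$ is convex on $[0,\infty)$ for $k\ge1$, Jensen's inequality gives $\E[\nu_1^k\mid\{\growthfactor_i\}]\ge (N_{\noiseparam}U_1/S_{\noiseparam})^k$, and taking expectations reduces the claim to the statement $\E[(U_1/S_{\noiseparam})^k]\ge B_k\,\E[(1\wedge U_1/A_{\noiseparam})^k]$, after observing the identity $e^{k\noiseparam\growthfactor_1}/(e^{k\noiseparam\growthfactor_1}\vee A_{\noiseparam}^k)=(1\wedge U_1/A_{\noiseparam})^k$.

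For the second move I would split the sum as $S_{\noiseparam}=U_1+R$ with $R=\sum_{i=2}^{N_{\noiseparam}}U_i$, so that $R$ is independent of $U_1$. On the event $G=\{R\le 2A_{\noiseparam}\}$ I claim $(U_1/S_{\noiseparam})^k\ge 3^{-k}(1\wedge U_1/A_{\noiseparam})^k$, which I would verify by two cases. If $U_1\ge A_{\noiseparam}$ then $S_{\noiseparam}\le U_1+2A_{\noiseparam}\le 3U_1$, so $U_1/S_{\noiseparam}\ge 1/3$, matching $1\wedge U_1/A_{\noiseparam}=1$; if instead $U_1<A_{\noiseparam}$ then $S_{\noiseparam}\le 3A_{\noiseparam}$, so $U_1/S_{\noiseparam}\ge U_1/(3A_{\noiseparam})$, matching $1\wedge U_1/A_{\noiseparam}=U_1/A_{\noiseparam}$.

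Finally, since $1_G$ is a function of $R$ alone while $(1\wedge U_1/A_{\noiseparam})^k$ is a function of $U_1$ alone, independence lets the expectation factor, giving $\E[(U_1/S_{\noiseparam})^k]\ge 3^{-k}\,\E[(1\wedge U_1/A_{\noiseparam})^k]\,\bbP(G)$. A crude first-moment bound then closes the argument: $\E[R]=(N_{\noiseparam}-1)e^{r_{\noiseparam}}\le A_{\noiseparam}$, so Markov's inequality yields $\bbP(R>2A_{\noiseparam})\le 1/2$ and hence $\bbP(G)\ge 1/2$, giving the admissible constant $B_k=3^{-k}/2$. The one delicate point, and the reason the proof is arranged this way, is that one cannot simply invoke the law of large numbers (Lemma \ref{lem:LLN}) to replace $S_{\noiseparam}$ by $A_{\noiseparam}$ inside a $k$th moment, since the upper tail of $S_{\noiseparam}$ could a priori spoil such an $L^k$ statement; restricting to $G$ and exploiting the independence of $U_1$ from the remaining sum $R$ is precisely what sidesteps this difficulty and constitutes the crux of the argument.
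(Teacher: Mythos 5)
Your proof is correct and follows essentially the same route as the paper's: restrict to the event that the sum over the remaining $N_{\noiseparam}-1$ individuals is at most (roughly) twice its mean, use independence of $U_1$ from that sum to factor the expectation, and compare $U_1+O(A_{\noiseparam})$ with $U_1\vee A_{\noiseparam}$. Your two small refinements --- Jensen's inequality applied to the conditional hypergeometric law of $\nu_1$ in place of the asymptotic equivalence of Lemma \ref{lem:nu1ratio}, and Markov's inequality in place of the law of large numbers --- make the bound an honest non-asymptotic inequality valid for every $\noiseparam$, which is a mild improvement over the paper's large-$\noiseparam$ version.
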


\begin{proof}
By Lemma \ref{lem:nu1ratio},
\begin{eqnarray*}
\E[\nu^k] &\sim N^k\E\left[\frac{e^{k\noiseparam \growthfactor_1}}{S_{\noiseparam}^k} \right]\\
&\ge N^k\E\left[\frac{e^{2\noiseparam \growthfactor_1}}{(e^{\noiseparam \growthfactor_1} + 2(A_{\noiseparam}-\mu_{\noiseparam}))^k} 1_{\{\sum_{i=2}^{N_{\noiseparam}}e^{\noiseparam \growthfactor_i} \le 2(A_{\noiseparam}-\mu_{\noiseparam})\}}\right]\\
&=  N^k\E\left[\frac{e^{k\noiseparam \growthfactor_1}}{(e^{\noiseparam \growthfactor_1} + 2(A_{\noiseparam}-\mu_{\noiseparam}))^k} \right]\bbP\left( \sum_{i=2}^{N_{\noiseparam}}e^{\noiseparam \growthfactor_i} \le 2(A_{\noiseparam}-\mu_{\noiseparam})\right)
\end{eqnarray*}
By the law of large numbers
\begin{equation*}
\bbP\left( \sum_{i=2}^{N_{\noiseparam}}e^{\noiseparam \growthfactor_i} \le 2(A_{\noiseparam}-\mu_{\noiseparam})\right) \ge \frac{1}{2}
\end{equation*}
for large enough $\noiseparam$, therefore 
\begin{eqnarray}
 N^k\E\left[\frac{e^{k\noiseparam \growthfactor_1}}{S_{\noiseparam}^k} \right] &\ge \frac{N^k}{2}\E\left[\frac{e^{k\noiseparam \growthfactor_1}}{(e^{\noiseparam \growthfactor_1} + 2(A_{\noiseparam}-\mu_{\noiseparam}))^k} \right]\\
 &\ge \frac{N^k}{2^k}\E\left[\frac{e^{k\noiseparam \growthfactor_1}}{(e^{k\noiseparam \growthfactor_1} + A_{\noiseparam})^k} \right].
\end{eqnarray}
The result follows from
\begin{equation}
(e^{\noiseparam \growthfactor_1} + A_{\noiseparam})^k \le 2^k (e^{k\noiseparam \growthfactor_1} \vee A_{\noiseparam}^k)
\end{equation}
\end{proof}

\begin{lem}\label{lem:replaceratio}
Assume Equation \ref{eq:hass} and Equation \ref{eq:scale_general}. Then for $1<\alpha$
\begin{equation*}
\E\left[\frac{e^{k\noiseparam \growthfactor}}{\left( \sum_{i=1}^{N_{\noiseparam}} e^{\noiseparam \growthfactor_1} \right)^k}\right] \sim \E\left[\frac{e^{k\noiseparam \growthfactor_1}}{\left( e^{\noiseparam \growthfactor_1} +A_{\noiseparam} \right)^k}\right].
\end{equation*}
\end{lem}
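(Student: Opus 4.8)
The plan is to isolate the first summand and exploit the self-averaging of the remainder. Write $S_{\noiseparam}=e^{\noiseparam\growthfactor_1}+\tilde S_{\noiseparam}$, where $\tilde S_{\noiseparam}=\sum_{i=2}^{N_{\noiseparam}}e^{\noiseparam\growthfactor_i}$ is independent of $\growthfactor_1$, and introduce $g(s)=e^{k\noiseparam\growthfactor_1}/(e^{\noiseparam\growthfactor_1}+s)^k$, which is decreasing in $s\ge0$. Then the left-hand side is exactly $\E[g(\tilde S_{\noiseparam})]$ and the right-hand side is the reference quantity $I_{\noiseparam}\equiv\E[g(A_{\noiseparam})]$, so the claim is $\E[g(\tilde S_{\noiseparam})]\sim I_{\noiseparam}$. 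Since $\E[\tilde S_{\noiseparam}]=(N_{\noiseparam}-1)e^{r_{\noiseparam}}\sim A_{\noiseparam}$, and since $\alpha>1$ is equivalent to the hypothesis $\tau>1/(q-1)$ of Lemma \ref{lem:LLN}, that lemma gives $\bbP(E_{\varepsilon})\to1$ for $E_{\varepsilon}\equiv\{(1-\varepsilon)A_{\noiseparam}\le\tilde S_{\noiseparam}\le(1+\varepsilon)A_{\noiseparam}\}$.

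First I would sandwich the integrand on $E_{\varepsilon}$. By monotonicity of $g$, on $E_{\varepsilon}$ one has $g((1+\varepsilon)A_{\noiseparam})\le g(\tilde S_{\noiseparam})\le g((1-\varepsilon)A_{\noiseparam})$. Re-running the computation of Lemma \ref{lem:numoments} with the center $A_{\noiseparam}$ replaced by $cA_{\noiseparam}$ (the shift changes $\eta_{\noiseparam}(1)$ only by the constant $\noiseparam^{-1}\ln c\to0$, and $R_{\noiseparam}$ stays negligible) shows $\E[g(cA_{\noiseparam})]\sim\rho(c)\,I_{\noiseparam}$, with $\rho(c)=c^{-\alpha}$ in the tail-dominated case $\alpha<k$ (using $\bbP(e^{\noiseparam\growthfactor_1}>cA_{\noiseparam})\sim c^{-\alpha}\bbP(e^{\noiseparam\growthfactor_1}>A_{\noiseparam})$ from the local power-law tail) and $\rho(c)=c^{-k}$ in the bulk case $\alpha>k$; in either case $\rho(c)\to1$ as $c\to1$. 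Because $g((1\pm\varepsilon)A_{\noiseparam})$ depends only on $\growthfactor_1$ and is independent of $E_{\varepsilon}$, the indicator $1_{E_{\varepsilon}}$ can be inserted or removed at the cost of a factor $1-\bbP(E_{\varepsilon}^c)=1+o(1)$, so $\E[g((1\pm\varepsilon)A_{\noiseparam})1_{E_{\varepsilon}}]\sim\rho(1\pm\varepsilon)I_{\noiseparam}$. This produces a lower bound $\E[g(\tilde S_{\noiseparam})]\ge\rho(1+\varepsilon)I_{\noiseparam}(1+o(1))$ and an upper bound governed by $\rho(1-\varepsilon)I_{\noiseparam}$, modulo the off-$E_{\varepsilon}$ contributions handled next.

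Next I would show that the $E_{\varepsilon}^c$ contribution is $o(I_{\noiseparam})$, splitting into the upper deviation $\{\tilde S_{\noiseparam}>(1+\varepsilon)A_{\noiseparam}\}$ and the lower deviation $\{\tilde S_{\noiseparam}<(1-\varepsilon)A_{\noiseparam}\}$. On the upper deviation, monotonicity gives $g(\tilde S_{\noiseparam})\le g((1+\varepsilon)A_{\noiseparam})$, and independence turns the expectation into $\E[g((1+\varepsilon)A_{\noiseparam})]\,\bbP(\tilde S_{\noiseparam}>(1+\varepsilon)A_{\noiseparam})=\rho(1+\varepsilon)I_{\noiseparam}\cdot o(1)=o(I_{\noiseparam})$. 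The lower deviation is the crux: there $g$ may be as large as $1$, so the trivial bound $g\le1$ only yields $\bbP(\tilde S_{\noiseparam}<(1-\varepsilon)A_{\noiseparam})$, and since $I_{\noiseparam}=e^{-O(\noiseparam^{q'})}$ is itself exponentially small, the bare convergence in probability of Lemma \ref{lem:LLN} does not suffice. Here I would instead invoke a Chernoff bound for the left tail of $\tilde S_{\noiseparam}$: because the summands $e^{\noiseparam\growthfactor_i}$ are positive, all negative exponential moments $\E[e^{-\theta e^{\noiseparam\growthfactor}}]<1$ exist, and optimizing $\bbP(\tilde S_{\noiseparam}\le(1-\varepsilon)A_{\noiseparam})\le e^{\theta(1-\varepsilon)A_{\noiseparam}}\,\E[e^{-\theta e^{\noiseparam\growthfactor}}]^{N_{\noiseparam}-1}$ over $\theta>0$ gives decay of order $e^{-c(\varepsilon)N_{\noiseparam}}=e^{-c(\varepsilon)e^{\tau r_{\noiseparam}}}$, which is doubly exponential in $\noiseparam$ and hence $o(I_{\noiseparam})$.

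Finally I would combine the two bounds to obtain $\rho(1+\varepsilon)I_{\noiseparam}(1+o(1))\le\E[g(\tilde S_{\noiseparam})]\le\rho(1-\varepsilon)I_{\noiseparam}(1+o(1))$, divide by $I_{\noiseparam}$, send $\noiseparam\to\infty$ and then $\varepsilon\to0$, and use $\rho(1\pm\varepsilon)\to1$ to conclude $\E[g(\tilde S_{\noiseparam})]\sim I_{\noiseparam}$, which is the assertion. I expect the main obstacle to be precisely the lower-deviation step, since it is the only place where the exponential smallness of the reference quantity $I_{\noiseparam}$ forces us beyond Lemma \ref{lem:LLN} and requires quantifying the left-tail decay of $\tilde S_{\noiseparam}$; the positivity of the summands is exactly what makes the needed super-exponential bound available.
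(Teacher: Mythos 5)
Your overall architecture is the same as the paper's: split on the concentration window for $\tilde S_{\noiseparam}=\sum_{i\ge2}e^{\noiseparam\growthfactor_i}$ supplied by Lemma \ref{lem:LLN}, sandwich the integrand by $g((1\pm\varepsilon)A_{\noiseparam})$ on that window, and show the complement contributes $o(I_{\noiseparam})$. (Your use of the asymptotics of Lemma \ref{lem:numoments} to get $\E[g(cA_{\noiseparam})]\sim\rho(c)I_{\noiseparam}$ is not circular, since only the deterministic integral evaluation after the centering step is being reused.) Where you go beyond the paper is the lower-deviation event $\{\tilde S_{\noiseparam}<(1-\varepsilon)A_{\noiseparam}\}$, and you are right that this is the crux: the printed proof bounds the entire off-window contribution by a constant times $\varepsilon\,\E\bigl[e^{k\noiseparam\growthfactor_1}/(e^{\noiseparam\growthfactor_1}+N_{\noiseparam})^k\bigr]$, using only $S_{\noiseparam}\ge N_{\noiseparam}$ and $\bbP(E_{\varepsilon}^c)\le\varepsilon$; since that expectation exceeds $I_{\noiseparam}$ by a factor of order $e^{c\noiseparam^{q'}}$, taking $\varepsilon\to0$ after $\noiseparam\to\infty$ does not obviously dispose of it. Your insistence on a quantitative left-tail estimate is therefore a genuine improvement in rigor on this step.

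However, the rate you claim from the Chernoff bound is not what the optimization delivers. The mean $e^{r_{\noiseparam}}$ of a single summand is carried by the rare event $\growthfactor\approx\noiseparam^{q'-1}$, of probability roughly $e^{-\noiseparam^{q'}/q}$, so only about $n_*=N_{\noiseparam}e^{-\noiseparam^{q'}/q}=N_{\noiseparam}^{1-1/(\tau(q-1))}$ summands actually contribute to $A_{\noiseparam}$; no exponential-moment argument can then give decay faster than $e^{-c(\varepsilon)n_*}$. Concretely, any $\theta$ with $\theta A_{\noiseparam}\gtrsim N_{\noiseparam}$ forces $1/\theta\ll e^{\noiseparam^{q'}}$, whence $\E[U_1 1_{\{U_1\le 1/\theta\}}]\ll\E[U_1]$ and the prefactor $e^{\theta(1-\varepsilon)A_{\noiseparam}}$ overwhelms $\E[e^{-\theta U_1}]^{N_{\noiseparam}-1}$, so the bound $e^{-c(\varepsilon)N_{\noiseparam}}$ is unattainable. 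The conclusion nevertheless survives with a small repair: truncate at $M=e^{(1+\eta)\noiseparam^{q'}}$ with $\eta$ small enough that the truncated summands retain a $(1-\varepsilon/2)$ fraction of the mean, and apply Bernstein's inequality to the bounded truncated sum to get $\bbP(\tilde S_{\noiseparam}\le(1-\varepsilon)A_{\noiseparam})\le e^{-c(\varepsilon)A_{\noiseparam}/M}$, where $A_{\noiseparam}/M\to\infty$ exponentially in $\noiseparam^{q'}$ precisely because $\tau>1/(q-1)$, i.e.\ $\alpha>1$. That is still doubly exponential in $\noiseparam$, hence $o(I_{\noiseparam})$, and with this substitution your argument is complete.
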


\begin{proof}
The idea of the proof is based on  \cite{schweinsberg2003coalescent} combined with  Lemma \ref{lem:LLN}.  By the Lemma \ref{lem:LLN}, for any $\epsilon>0$ and $\delta$ such that $(1-\delta)\E[X_{i,N}]>1$, we can pick $\noiseparam$ large enough that 
\begin{equation}\label{eq:S_N-lln}
\bbP\left(1-\delta\le \frac{S_{\noiseparam}}{A_{\noiseparam}} \le  1+\delta\right) > 1-\varepsilon. 
\end{equation}
If  $\delta$ and $N$ are such that Equation \ref{eq:S_N-lln} holds,  
\begin{eqnarray*}
\E\left[\frac{e^{k\noiseparam \growthfactor}}{\left( \sum_{i=1}^{N_{\noiseparam}} e^{\noiseparam \growthfactor_1} \right)^k}\right]
 &= \E\left[\frac{e^{k\noiseparam \growthfactor}}{\left( \sum_{i=1}^{N_{\noiseparam}} e^{\noiseparam \growthfactor_1} \right)^k}1_{\left\{\sum _{i=2}^{N_{\noiseparam}} e^{\noiseparam \growthfactor_i} < (1-\delta)NA_{\noiseparam}\}\right\}}\right]
\\
&\quad\quad\quad  +  \E\left[\frac{e^{k\noiseparam \growthfactor_1}}{\left( \sum_{i=1}^{N_{\noiseparam}} e^{\noiseparam \growthfactor_i} \right)^k}1_{\left\{\sum _{i=2}^{N_{\noiseparam}} e^{\noiseparam \growthfactor_i} \ge (1+\delta)NA_{\noiseparam}\right\}}\right]\\
  &\le 4 \varepsilon \E\left[\frac{e^{k\noiseparam \growthfactor}}{\left(e^{\noiseparam \growthfactor_1} + N_{\noiseparam} \right)^k}\right]
   +  \E\left[\frac{e^{k\noiseparam \growthfactor_1}}{\left( e^{\noiseparam \growthfactor_1} +(1-\delta) A_{\noiseparam} \right)^k}\right]
\end{eqnarray*}
and 
\begin{eqnarray}
\E\left[\frac{e^{k\noiseparam \growthfactor}}{\left( \sum_{i=1}^{N_{\noiseparam}} e^{\noiseparam \growthfactor_1} \right)^k}\right] \ge (1-\varepsilon)\E\left[\frac{e^{k\noiseparam \growthfactor_1}}{\left( e^{\noiseparam \growthfactor_1} +(1+\delta) A_{\noiseparam} \right)^k} \right].
\end{eqnarray}
The result follows after taking $\varepsilon,\delta \to 0$.
\end{proof}

\section*{References}

\bibliographystyle{alpha}
\bibliography{./largebutfinite.bib}

\newcommand{\etalchar}[1]{$^{#1}$}
\begin{thebibliography}{CGCSWB22}

\bibitem[Ami20a]{amir2020elementary}
Ariel Amir.
\newblock An elementary renormalization-group approach to the generalized
  central limit theorem and extreme value distributions.
\newblock {\em Journal of Statistical Mechanics: Theory and Experiment},
  2020(1):013214, 2020.

\bibitem[Ami20b]{amir2020thinking}
Ariel Amir.
\newblock {\em Thinking Probabilistically: Stochastic Processes, Disordered
  Systems, and Their Applications}.
\newblock Cambridge University Press, 2020.

\bibitem[BABM05]{ben2005limit}
G{\'e}rard Ben~Arous, Leonid~V Bogachev, and Stanislav~A Molchanov.
\newblock Limit theorems for sums of random exponentials.
\newblock {\em Probability theory and related fields}, 132:579--612, 2005.

\bibitem[Ber09]{berestycki2009recent}
Nathana{\"e}l Berestycki.
\newblock Recent progress in coalescent theory.
\newblock {\em arXiv preprint arXiv:0909.3985}, 2009.

\bibitem[Bov06]{bovier2006statistical}
Anton Bovier.
\newblock {\em Statistical mechanics of disordered systems: a mathematical
  perspective}, volume~18.
\newblock Cambridge University Press, 2006.

\bibitem[BS98]{bolthausen1998ruelle}
Erwin Bolthausen and Alain-Sol Sznitman.
\newblock On ruelle's probability cascades and an abstract cavity method.
\newblock {\em Communications in mathematical physics}, 197:247--276, 1998.

\bibitem[Can74]{cannings1974latent}
Chris Cannings.
\newblock The latent roots of certain markov chains arising in genetics: a new
  approach, i. haploid models.
\newblock {\em Advances in Applied Probability}, 6(2):260--290, 1974.

\bibitem[CGCSWB22]{cordero2022lambda}
Fernando Cordero, Adri{\'a}n Gonz{\'a}lez~Casanova, Jason Schweinsberg, and
  Maite Wilke-Berenguer.
\newblock {$\Lambda$}-coalescents arising in a population with dormancy.
\newblock {\em Electronic Journal of Probability}, 27:1--34, 2022.

\bibitem[Cha09]{charlesworth2009effective}
Brian Charlesworth.
\newblock Effective population size and patterns of molecular evolution and
  variation.
\newblock {\em Nature Reviews Genetics}, 10(3):195--205, 2009.

\bibitem[DB81]{de1981asymptotic}
Nicolaas~Govert De~Bruijn.
\newblock {\em Asymptotic methods in analysis}, volume~4.
\newblock Courier Corporation, 1981.

\bibitem[DD08]{durrett2008probability}
Richard Durrett and Richard Durrett.
\newblock {\em Probability models for DNA sequence evolution}, volume~2.
\newblock Springer, 2008.

\bibitem[DEP11]{der2011generalized}
Ricky Der, Charles~L Epstein, and Joshua~B Plotkin.
\newblock Generalized population models and the nature of genetic drift.
\newblock {\em Theoretical population biology}, 80(2):80--99, 2011.

\bibitem[Der81]{derrida1981random}
Bernard Derrida.
\newblock Random-energy model: An exactly solvable model of disordered systems.
\newblock {\em Physical Review B}, 24(5):2613, 1981.

\bibitem[DM21]{derrida2021one}
Bernard Derrida and Peter Mottishaw.
\newblock One step replica symmetry breaking and overlaps between two
  temperatures.
\newblock {\em Journal of Physics A: Mathematical and Theoretical},
  54(4):045002, 2021.

\bibitem[Eis83]{eisele1983third}
Theodor Eisele.
\newblock On a third-order phase transition.
\newblock 1983.

\bibitem[EK09]{ethier2009markov}
Stewart~N Ethier and Thomas~G Kurtz.
\newblock {\em Markov processes: characterization and convergence}.
\newblock John Wiley \& Sons, 2009.

\bibitem[FVJH{\etalchar{+}}21]{freschi2021population}
Luca Freschi, Roger Vargas~Jr, Ashaque Husain, SM~Mostofa Kamal, Alena
  Skrahina, Sabira Tahseen, Nazir Ismail, Anna Barbova, Stefan Niemann,
  Daniela~Maria Cirillo, et~al.
\newblock Population structure, biogeography and transmissibility of
  mycobacterium tuberculosis.
\newblock {\em Nature communications}, 12(1):6099, 2021.

\bibitem[Gag18]{gagneux2018ecology}
Sebastien Gagneux.
\newblock Ecology and evolution of mycobacterium tuberculosis.
\newblock {\em Nature Reviews Microbiology}, 16(4):202--213, 2018.

\bibitem[Gil73]{gillespie1973natural}
John~H Gillespie.
\newblock Natural selection with varying selection coefficients--a haploid
  model.
\newblock {\em Genetics Research}, 21(2):115--120, 1973.

\bibitem[Gil74]{gillespie1974natural}
John~H Gillespie.
\newblock Natural selection for within-generation variance in offspring number.
\newblock {\em Genetics}, 76(3):601--606, 1974.

\bibitem[Gil01]{gillespie2001population}
John~H Gillespie.
\newblock Is the population size of a species relevant to its evolution?
\newblock {\em Evolution}, 55(11):2161--2169, 2001.

\bibitem[Gil04]{gillespie2004population}
John~H Gillespie.
\newblock {\em Population genetics: a concise guide}.
\newblock JHU press, 2004.

\bibitem[Gol18]{goldenfeld2018lectures}
Nigel Goldenfeld.
\newblock {\em Lectures on phase transitions and the renormalization group}.
\newblock CRC Press, 2018.

\bibitem[Hal18]{hallatschek2018selection}
Oskar Hallatschek.
\newblock Selection-like biases emerge in population models with recurrent
  jackpot events.
\newblock {\em Genetics}, 210(3):1053--1073, 2018.

\bibitem[JLA23]{jafarpour2022evolutionary}
Farshid Jafarpour, Ethan Levien, and Ariel Amir.
\newblock Evolutionary dynamics in non-markovian models of microbial
  populations.
\newblock {\em Physical Review E}, 108(3):034402, 2023.

\bibitem[Ker22]{kern2022skorokhod}
Julian Kern.
\newblock The skorokhod topologies: What they are and why we should care.
\newblock {\em arXiv preprint arXiv:2210.16026}, 2022.

\bibitem[Kin82]{kingman1982genealogy}
John~FC Kingman.
\newblock On the genealogy of large populations.
\newblock {\em Journal of applied probability}, 19(A):27--43, 1982.

\bibitem[Kis15]{kistler2015derrida}
Nicola Kistler.
\newblock Derrida’s random energy models: From spin glasses to the extremes
  of correlated random fields.
\newblock {\em Correlated Random Systems: Five Different Methods: CIRM
  Jean-Morlet Chair, Spring 2013}, pages 71--120, 2015.

\bibitem[LD43]{luria1943mutations}
Salvador~E Luria and Max Delbr{\"u}ck.
\newblock Mutations of bacteria from virus sensitivity to virus resistance.
\newblock {\em Genetics}, 28(6):491, 1943.

\bibitem[LMKA21]{levien2021non}
Ethan Levien, Jiseon Min, Jane Kondev, and Ariel Amir.
\newblock Non-genetic variability in microbial populations: survival strategy
  or nuisance?
\newblock {\em Reports on progress in physics}, 84(11):116601, 2021.

\bibitem[Mik99]{mikosch1999regular}
Thomas Mikosch.
\newblock {\em Regular variation, subexponentiality and their applications in
  probability theory}, volume~99.
\newblock Eindhoven University of Technology Eindhoven, The Netherlands, 1999.

\bibitem[MM09]{mezard2009information}
Marc Mezard and Andrea Montanari.
\newblock {\em Information, physics, and computation}.
\newblock Oxford University Press, 2009.

\bibitem[MS01]{mohle2001classification}
Martin Mohle and Serik Sagitov.
\newblock A classification of coalescent processes for haploid exchangeable
  population models.
\newblock {\em Annals of Probability}, pages 1547--1562, 2001.

\bibitem[Nol20]{nolan2020univariate}
John~P Nolan.
\newblock Univariate stable distributions.
\newblock {\em Springer Series in Operations Research and Financial
  Engineering, DOI}, 10:978--3, 2020.

\bibitem[NW18]{neher2018progress}
Richard~A Neher and Aleksandra~M Walczak.
\newblock Progress and open problems in evolutionary dynamics.
\newblock {\em arXiv preprint arXiv:1804.07720}, 2018.

\bibitem[OH21]{okada2021dynamic}
Takashi Okada and Oskar Hallatschek.
\newblock Dynamic sampling bias and overdispersion induced by skewed offspring
  distributions.
\newblock {\em Genetics}, 219(4):iyab135, 2021.

\bibitem[Pit99]{pitman1999coalescents}
Jim Pitman.
\newblock Coalescents with multiple collisions.
\newblock {\em Annals of Probability}, pages 1870--1902, 1999.

\bibitem[RAS15]{rassoul2015course}
Firas Rassoul-Agha and Timo Sepp{\"a}l{\"a}inen.
\newblock {\em A course on large deviations with an introduction to Gibbs
  measures}, volume 162.
\newblock American Mathematical Soc., 2015.

\bibitem[Rue87]{ruelle1987mathematical}
David Ruelle.
\newblock A mathematical reformulation of derrida's {REM} and {GREM}.
\newblock {\em Communications in Mathematical Physics}, 108:225--239, 1987.

\bibitem[Sch01]{schweinsberg2001coalescents}
Jason Schweinsberg.
\newblock {\em Coalescents with simultaneous multiple collisions}.
\newblock University of California, Berkeley, 2001.

\bibitem[Sch03]{schweinsberg2003coalescent}
Jason Schweinsberg.
\newblock Coalescent processes obtained from supercritical galton--watson
  processes.
\newblock {\em Stochastic processes and their Applications}, 106(1):107--139,
  2003.

\bibitem[Sch15]{schreiber2015unifying}
Sebastian~J Schreiber.
\newblock Unifying within-and between-generation bet-hedging theories: an ode
  to jh gillespie.
\newblock {\em The American Naturalist}, 186(6):792--796, 2015.

\bibitem[SHJ19]{sackman2019inferring}
Andrew~M Sackman, Rebecca~B Harris, and Jeffrey~D Jensen.
\newblock Inferring demography and selection in organisms characterized by
  skewed offspring distributions.
\newblock {\em Genetics}, 211(3):1019--1028, 2019.

\bibitem[SJHW23]{Siri2023beyond}
Arno Siri-J\'{e}gousse and Alejandro H.~Wences.
\newblock Exchangeable coalescents beyond the cannings class.
\newblock {\em arXiv preprint arXiv:2212.02154}, 2023.

\bibitem[SK12]{starrfelt2012bet}
Jostein Starrfelt and Hanna Kokko.
\newblock Bet-hedging—a triple trade-off between means, variances and
  correlations.
\newblock {\em Biological Reviews}, 87(3):742--755, 2012.

\bibitem[TL14]{tellier2014coalescence}
Aur{\'e}lien Tellier and Christophe Lemaire.
\newblock Coalescence 2.0: a multiple branching of recent theoretical
  developments and their applications.
\newblock {\em Molecular Ecology}, 23(11):2637--2652, 2014.

\bibitem[Tou05]{touchette2005legendre}
Hugo Touchette.
\newblock Legendre-fenchel transforms in a nutshell.
\newblock {\em URL http://www. maths. qmul. ac. uk/\~{} ht/archive/lfth2. pdf},
  2005.

\bibitem[Tou09]{touchette2009large}
Hugo Touchette.
\newblock The large deviation approach to statistical mechanics.
\newblock {\em Physics Reports}, 478(1-3):1--69, 2009.

\bibitem[WV19]{wright2019stochastic}
Erik~S Wright and Kalin~H Vetsigian.
\newblock Stochastic exits from dormancy give rise to heavy-tailed
  distributions of descendants in bacterial populations.
\newblock {\em Molecular Ecology}, 28(17):3915--3928, 2019.

\bibitem[Zol86]{zolotarev1986one}
Vladimir~M Zolotarev.
\newblock {\em One-dimensional stable distributions}, volume~65.
\newblock American Mathematical Soc., 1986.

\end{thebibliography}

\end{document}